\documentclass[10pt, twocolumn, a4paper]{article}

\usepackage[margin=0.75in]{geometry} 
\usepackage{graphicx}
\usepackage{amsmath, amssymb, amsfonts} 
\usepackage{amsthm} % <-- ADDED to fix the theorem/remark spacing issues
\usepackage{titlesec}
\usepackage{bm}
\usepackage{array}
\usepackage{xcolor}
\usepackage{booktabs}
\usepackage{enumitem}
\usepackage{float}
\usepackage{algorithm}
\usepackage{algpseudocode}

\usepackage{comment}

\newtheorem{theorem}{Theorem}
\newtheorem{lemma}{Lemma}
\newtheorem{proposition}{Proposition}

\newtheorem{assumption}{Assumption}
\newtheorem{definition}{Definition}
\newtheorem{remark}{Remark}
\newcommand{\R}{\mathbb{R}}
\newcommand{\E}{\mathbb{E}}

\newcommand{\diag}{\mathrm{diag}}

\newcommand{\MSE}{\mathrm{MSE}}
\newcommand{\Zp}{\mathbf{Z}_p}
\newcommand{\Up}{\mathbf{U}_p}
\newcommand{\Yp}{\mathbf{Y}_p}
\newcommand{\Uf}{\mathbf{U}_f}
\newcommand{\Yf}{\mathbf{Y}_f}
\newcommand{\Phib}{\boldsymbol{\Phi}}
\newcommand{\zini}{\mathbf{z}_{\mathrm{ini}}}
\newcommand{\uN}{\mathbf{u}_N}
\newcommand{\yN}{\mathbf{y}_N}

\titleformat{\section}
  {\normalsize\bfseries} % Font size and style
  {\thesection}     % Label (e.g., 1.)
  {1em}             % Space between label and title
  {}                % Code before title

\titleformat{\subsection}
  {\small\bfseries} % Smaller size for subsections
  {\thesubsection}
  {1em}
  {}

\newcommand{\yhatN}{\hat{\mathbf{y}}_N}
\newcommand{\Spp}{\Sigma_{pp}}
\newcommand{\Sup}{\Sigma_{up}}
\newcommand{\Syp}{\Sigma_{yp}}
\newcommand{\Suu}{\Sigma_{uu}}
\newcommand{\Syu}{\Sigma_{yu}}
\newcommand{\Wp}{\mathbf{W}_p}
\newcommand{\Thetahat}{\hat{\Theta}}
\newcommand{\Thetatrue}{\Theta^{\star}}
\newcommand{\Thetabar}{\bar{\Theta}}
\newcommand{\vphi}{\varphi}

\title{\vspace{-0.0cm}\textbf{Projection-Regularized Indirect Data-Driven Predictive Control}}

\author{
    \textbf{Mahmood Mazare} \quad \textbf{Hossein Ramezani} \\
    \vspace{0.5em}
    \small Institute of Mechanical and Electrical Engineering, University of Southern Denmark, Odense, Denmark 
}
\date{} % Leave blank to prevent the current date from printing

\begin{document}

% --- Full-Width Title, Abstract, and Keywords ---
% The \twocolumn command with an optional argument allows us to span 
% content across both columns before the split text begins.
\twocolumn[
  \begin{@twocolumnfalse}
    \maketitle
    
    \vspace{-1.5em}
    \hrule
    \vspace{1.0em}
    
    \begin{abstract}
    \noindent Indirect data-driven predictive control methods often suffer under process noise and data scarcity. This paper introduces Projection-Regularized Predictive Control (PRPC), retaining the fundamental-lemma weight vector via a regularized projection analytically condensed into an efficient, fixed-dimension covariance update. A rigorous bias--variance analysis proves PRPC strictly reduces prediction error under process noise (errors-in-variables) and structural rank deficiencies compared to unregularized subspace methods. We leverage these properties to develop an adaptive sliding-window controller for linear time-varying (LTV) systems. To guarantee safety despite closed-loop data correlations, we derive a uniform-in-time, finite-sample confidence bound on the empirical predictor using vector-valued martingale concentration inequalities. Embedding this statistical uncertainty radius into a dynamically tightened constraint set rigorously ensures robust recursive feasibility and Input-to-State practical Stability (ISpS) with high probability. Simulations on LTI and LTV benchmarks demonstrate real-time tractability and strict constraint satisfaction.
    \end{abstract}
    
    \vspace{1em}
    \noindent \textbf{Keywords:} Data-driven; Predictive control; Regularization; Adaptive control; Linear time-varying systems.
    
    \vspace{1.5em}
    \hrule
    \vspace{2.5em} % Adds space before the two-column text begins
  \end{@twocolumnfalse}
]

% ==============================================================
\section{Introduction}
\label{sec:introduction}
% ==============================================================

The field of data-driven predictive control (DPC) has witnessed an unprecedented surge in research interest, emerging as a powerful alternative to traditional model predictive control (MPC) by circumventing the often prohibitively expensive step of parametric system identification \cite{Verheijenetal2023,Zhouetal2024,Persisetal2019,Hewingetal2020,Zieglmeieretal2025}. The theoretical foundation for many modern data-driven methodologies is rooted in behavioral systems theory \cite{Markovskyetal2021,Yinetal2020}, specifically Willems' Fundamental Lemma \cite{Willemsetal2005,Waardeetal2019}. This seminal result establishes that for discrete-time linear time-invariant (LTI) systems, all possible input-output trajectories can be exactly represented as linear combinations of the columns of a block-Hankel matrix, provided the historical input sequences are persistently exciting \cite{Berberichetal2019,Chenetal2022}.

Building on this behavioral theory, direct methods such as Data-Enabled Predictive Control (DeePC) \cite{Coulsonetal2018} embed raw data matrices directly into the online optimization problem, seamlessly unifying system realization and optimal control. This theoretical elegance has facilitated the successful deployment of DeePC across diverse physical applications, ranging from building HVAC systems \cite{Chindeetal2022} and modern power grids \cite{Huangetal_12019} to unmanned aerial vehicles \cite{Elokdaetal2021} and connected autonomous vehicles \cite{Wangetal2022}. Despite its profound deterministic foundations, direct DPC faces substantial practical challenges in stochastic settings. Directly applying raw Hankel matrices renders the predictions highly sensitive to measurement noise, leading to severe overfitting and jeopardizing closed-loop performance \cite{Breschietal2022}. To mitigate this, researchers have extensively explored mathematical regularizations \cite{Breschietal2023,Huangetal2020}, instrumental variables \cite{vanWingerdenetal2022}, and distributionally robust formulations \cite{Coulsonetal2020,Lietal2024,Farjadniaetal2022} to balance the implicit bias-variance trade-off. Furthermore, because direct methods retain the full dimension of the collected data, the computational burden scales cubically with the dataset size \cite{Daietal2022}, motivating extensive research into matrix decomposition \cite{Zhangetal2022}, reduced-order formulations \cite{VahidiMoghaddametal2024}, and deep learning approximations \cite{Zhangetal2024}.

Conversely, indirect methods such as Subspace Predictive Control (SPC) \cite{Favoreeletal1999} mitigate this computational bottleneck by employing a two-stage procedure. Drawing from subspace identification, SPC extracts fixed prediction matrices offline via least-squares pseudo-inverses, thereby removing high-dimensional latent variables from the online optimization. While computationally lightweight and functionally equivalent to DeePC in deterministic scenarios \cite{Fiedleretal2021}, standard SPC predictors are notoriously brittle under stochastic noise. When process noise transforms the data regressor into an errors-in-variables problem, unregularized pseudo-inverses induce severely biased predictions. Recognizing the complementary strengths of direct and indirect paradigms, recent literature has focused on unifying these approaches \cite{Dorfleretal2021}. Techniques such as Generalized Data-Driven Predictive Control (GDPC) \cite{Lazaretal_12023} and the Signal Matrix Model (SMM) \cite{Smithetal2024} seek to balance computational efficiency with robustness to noise. However, these hybrid and statistically motivated methods \cite{Premeretal2026,Wangetal2023} often retain partial online optimization of latent variables or suffer from numerical fragility and prediction collapse under severe noise or low-data sliding-window rank deficiencies.

The necessity for computationally efficient, noise-resilient predictors becomes even more pronounced when extending these frameworks beyond static LTI environments to handle time-varying dynamics \cite{Lianetal2021,Verhoeketal2021}. While robust extensions utilizing kernel-based methods \cite{Huangetal2022} and neural networks \cite{GhafGhanbarietal2025} have shown promise, the data-driven paradigm naturally shifts toward online adaptation to actively manage dynamically evolving systems \cite{Berberichetal2021}. Online formulations recursively update the underlying empirical predictive model in real-time using closed-loop operational data \cite{Shietal2023,VahidiMoghaddametal2023,Guerreroetal2025}. However, this real-time recursive update inherently correlates the prediction regressors with past system noise sequences. This correlation fundamentally violates the statistical independence assumptions underpinning classical concentration inequalities and traditional robust tube formulations \cite{Panetal2021,Panetal2022,Huangetal2025}. Consequently, existing online adaptive algorithms are forced to either sacrifice rigorous mathematical stability guarantees or impose overly conservative deterministic assumptions that stifle transient control performance.

To explicitly bridge this critical theoretical void, this paper introduces a unified \emph{Robust Adaptive Projection-Regularized} DPC framework. By marrying exact algebraic complexity reduction with recent breakthroughs in self-normalized martingale theory, we establish a robust control architecture that simultaneously bypasses the online computational bottleneck and provides strict, finite-sample statistical guarantees for adaptive systems operating under conditionally subgaussian closed-loop noise. Instead of treating regularization merely as an online penalty tuning exercise, we retain the latent weight formulation with Tikhonov regularization during the offline predictor design. This approach systematically isolates the dominant deterministic dynamics from the noise subspace, ensuring structural stability in low-data sliding windows while mapping the DPC problem into a highly robust, low-dimensional online Quadratic Program (QP).

The main contributions of this paper are summarized as follows:
\begin{itemize}
    \item \textbf{Analytical Predictor Collapse for Real-Time Tractability:} We rigorously prove that the regularized predictor formulation can be analytically collapsed into an exact, closed-form affine mapping. Governed purely by empirical cross-covariance matrices, this bounds the online computational complexity to a low-dimensional QP, completely decoupling controller latency from the dataset size and strictly recovering SPC in the unregularized limit.
    \item \textbf{Bias-Variance Analysis and Information-Preserving Adaptation:} We provide a theoretical analysis demonstrating that optimal regularization uniquely reduces predictor error under errors-in-variables process noise. Building on this, we introduce a convex covariance blending strategy for recursive adaptation, guaranteeing that the empirical predictor remains analytically well-posed even when LTV systems are regulated to an equilibrium and lose persistent excitation.
    \item \textbf{High-Probability Robust Stability via SNM Bounds:} We explicitly characterize the closed-loop prediction error using multi-output self-normalized martingale (SNM) bounds, effectively handling the complex regressor-noise correlations induced by online learning. By embedding this uniform-in-time statistical uncertainty radius into a dynamically tightened constraint sequence, we prove that the closed-loop system achieves robust recursive feasibility and Input-to-State practical Stability (ISpS) with user-defined high probability, validated extensively via numerical simulations.
\end{itemize}

\subsection{Notations}
Let $\mathbb{R}$ and $\mathbb{Z}_{\geq 0}$ denote the sets of real numbers and non-negative integers, respectively. The identity matrix of dimension $n$ is denoted by $I_n$. For a finite collection of vectors $\{v_1, \dots, v_q\}$, we define the column stacking operator as $\mathrm{col}(v_1, \dots, v_q) := [v_1^\top, \dots, v_q^\top]^\top$. For a matrix $A$, $\|A\|_F$ denotes its Frobenius norm, and $A^\dagger$ denotes its Moore-Penrose pseudo-inverse. 

\begin{comment}
    Given a sequence of data $v = \{v(0), \dots, v(T-1)\}$ with $v(k) \in \mathbb{R}^{n_v}$, the block-Hankel matrix of depth $L \in \mathbb{Z}_{\geq 1}$ is defined as:
\begin{equation}
    \mathcal{H}_L(v) := \begin{bmatrix}
    v(0) & v(1) & \cdots & v(M-1) \\
    v(1) & v(2) & \cdots & v(M) \\
    \vdots & \vdots & \ddots & \vdots \\
    v(L-1) & v(L) & \cdots & v(T-1)
    \end{bmatrix},
\end{equation}
where $M = T - L + 1$ is the number of columns.
\end{comment}

% ==============================================================
\section{Problem Setup and Preliminaries}
\label{sec:setup}
% ==============================================================

%\subsection{System Description}
We consider discrete-time linear dynamical systems subject to process and measurement noise, represented in state-space form as:
\begin{subequations}\label{eq:sys}
\begin{align}
    x(k+1) &= A_k x(k) + B_k u(k) + w(k), \\
    y(k) &= C_k x(k) + v(k),
\end{align}
\end{subequations}
where $x(k) \in \mathbb{R}^n$, $u(k) \in \mathbb{R}^{n_u}$, and $y(k) \in \mathbb{R}^{n_y}$ are the state, input, and measured output, respectively. The unmeasured disturbances $w(k) \sim \mathcal{N}(0, \Sigma_w)$ and $v(k) \sim \mathcal{N}(0, \Sigma_v)$ are zero-mean, mutually independent stochastic processes. 

While the theoretical foundation of data-driven control relies on the assumption of LTI dynamics—where $(A_k, B_k, C_k) \equiv (A, B, C)$—this generalized representation accommodates LTV systems. For LTV systems experiencing parametric drift, the dynamics can be treated as locally LTI over sufficiently short, receding data windows.

\subsection{Behavioral Systems Theory and the Fundamental Lemma}

Data-driven predictive control relies on mapping historical trajectories to future outputs without parametric identification. Let $T_\mathrm{ini} \in \mathbb{Z}_{\geq 1}$ denote the past horizon (chosen such that $T_\mathrm{ini} \geq n$) and $N \in \mathbb{Z}_{\geq 1}$ denote the future prediction horizon. The total window length is $L = T_\mathrm{ini} + N$. 

Given an offline trajectory $\{u(k), y(k)\}_{k=0}^{T-1}$, we partition the input Hankel matrix $\mathcal{H}_L(u)$ into past and future blocks $\Up \in \mathbb{R}^{n_u T_\mathrm{ini} \times M}$ and $\Uf \in \mathbb{R}^{n_u N \times M}$. The output Hankel matrix $\mathcal{H}_L(y)$ is identically partitioned into $\Yp$ and $\Yf$. To formalize the data-driven regressor, we define the past data matrix $\Zp$ and the complete regressor matrix $\Phib$ as:
\begin{equation}
    \Zp := \begin{bmatrix} \Up \\ \Yp \end{bmatrix} \in \mathbb{R}^{T_z \times M}, \quad 
    \Phib := \begin{bmatrix} \Zp \\ \Uf \end{bmatrix} \in \mathbb{R}^{T_h \times M},
\end{equation}
where $T_z = (n_u + n_y)T_\mathrm{ini}$ and $T_h = T_z + n_u N$.

The capability to predict system behavior from these matrices hinges on the informativity of the collected data, formalized by the persistence of excitation.

\begin{definition}[Persistence of Excitation]\label{def:pe}
A data sequence $\{v(k)\}_{k=0}^{T-1}$ with $v(k) \in \mathbb{R}^{n_v}$ is persistently exciting of order $D$ if its Hankel matrix $\mathcal{H}_D(v)$ has full row rank, i.e., $\mathrm{rank}(\mathcal{H}_D(v)) = n_v D$.
\end{definition}

\begin{lemma}[Willems' Fundamental Lemma \cite{Willemsetal2005}]\label{lem:willems}
Consider a controllable, noise-free LTI system of order $n$. If the applied input sequence is persistently exciting of order $T_\mathrm{ini} + N + n$, then any valid $L$-step input-output trajectory of the system can be represented as a linear combination of the columns of the block-Hankel matrices. Specifically, for any initial condition $\zini = \mathrm{col}(\mathbf{u}_\mathrm{ini}, \mathbf{y}_\mathrm{ini}) \in \mathbb{R}^{T_z}$ and future input $\uN \in \mathbb{R}^{n_u N}$, the corresponding future output $\mathbf{y}_N \in \mathbb{R}^{n_y N}$ is a valid system trajectory if and only if there exists a vector $g \in \mathbb{R}^M$ such that:
\begin{equation}
    \begin{bmatrix} \Zp \\ \Uf \\ \Yf \end{bmatrix} g = \begin{bmatrix} \zini \\ \uN \\ \mathbf{y}_N \end{bmatrix}.
    \label{eq:fundamental_lemma}
\end{equation}
\end{lemma}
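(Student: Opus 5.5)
Both directions follow from the classical rank argument behind Willems' lemma. I would treat the Hankel data as generated by the LTI realization $(A,B,C)$ of \eqref{eq:sys} with $w\equiv 0$, $v\equiv 0$. Every column of the stacked matrix $\mathcal{H}_L(u)$, $\mathcal{H}_L(y)$ is a valid $L$-step trajectory. Hence it can be written as
\begin{equation*}
\begin{bmatrix}\mathcal{H}_L(u)\\ \mathcal{H}_L(y)\end{bmatrix} = \begin{bmatrix} I & 0\\ \mathcal{T}_L & \mathcal{O}_L\end{bmatrix}\begin{bmatrix}\mathcal{H}_L(u)\\ X_0\end{bmatrix},
\end{equation*}
where $\mathcal{O}_L$ is the extended observability matrix, $\mathcal{T}_L$ is the block-Toeplitz matrix of Markov parameters, and $X_0=[x(0),\dots,x(M-1)]$ collects the states at the start of each window.

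\emph{Sufficiency ($\Leftarrow$).} If some $g$ satisfies \eqref{eq:fundamental_lemma}, the stacked vector is a linear combination of trajectories. The trajectory set of an LTI system is a linear subspace, so the vector is itself a trajectory.

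\emph{Necessity ($\Rightarrow$).} This is the substantive direction. Take any valid trajectory $(\zini,\uN,\yN)$ with initial state $x_0$. By the factorization it equals
\begin{equation*}
\begin{bmatrix} I & 0\\ \mathcal{T}_L & \mathcal{O}_L\end{bmatrix}\begin{bmatrix}\mathrm{col}(\mathbf{u}_\mathrm{ini},\uN)\\ x_0\end{bmatrix}.
\end{equation*}
It therefore suffices to show that $\begin{bmatrix}\mathcal{H}_L(u)\\ X_0\end{bmatrix}$ has full row rank $n_uL+n$. Then a $g$ can be found mapping onto the vector $\mathrm{col}(\mathbf{u}_\mathrm{ini},\uN,x_0)$, and multiplying by the factor reproduces the trajectory. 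Reordering rows gives exactly the Hankel partition $\Zp,\Uf,\Yf$.

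\emph{Main obstacle: the rank claim.} I would argue by contradiction, as in Willems et al. and van Waarde et al.
\begin{itemize}
\item Suppose a nonzero left annihilator $(\eta,\xi)$ exists, with $\eta^\top\mathcal{H}_L(u)+\xi^\top X_0=0$.
\item Use the state recursion $x(k+1)=Ax(k)+Bu(k)$ to shift this relation forward. This produces $n+1$ annihilators of the deeper Hankel matrix $\mathcal{H}_{L+n}(u)$ stacked with $X_0$.
\item By Cayley--Hamilton, with $\chi(A)=\sum_{i=0}^n \alpha_iA^i=0$, take the corresponding combination $\sum_i\alpha_i(\cdot)$ of these annihilators. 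The $X_0$ terms cancel, leaving a left annihilator of $\mathcal{H}_{L+n}(u)$ alone.
\item Persistence of excitation of order $L+n$ (Definition~\ref{def:pe}) makes this annihilator zero. Combined with controllability of $(A,B)$, this forces $\eta=0$ and $\xi=0$.
\end{itemize}
The delicate bookkeeping is in checking that the shifted annihilators are independent. I expect this to need an induction over the shifts, and this is where I expect the main difficulty.

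\emph{Remark on the statement.} The paper takes $T_\mathrm{ini}\ge n$. For necessity this is not needed, since the argument works for any $L$. It matters only for uniqueness of $\yN$ given $(\zini,\uN)$, where it ensures $\mathcal{O}_{T_\mathrm{ini}}$ has full column rank under observability.
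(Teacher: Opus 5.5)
The paper gives no proof of this lemma; it imports it from Willems et al., so there is no in-paper argument to match. Your outline is correct. It is essentially the state-space proof of van Waarde, De Persis, Camlibel and Tesi, not the behavioral argument of the original reference. The factorization through $\mathrm{col}(\mathcal{H}_L(u),X_0)$ reduces necessity to a rank statement, sufficiency is linearity of the behavior, and the shift/Cayley--Hamilton/PE/controllability chain is the right one. This route turns everything into an elementary finite-dimensional left-kernel computation. The price is that you must fix a realization whose pair $(A,B)$ is controllable, e.g.\ a minimal one; the behavioral route is realization-free.

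You have, however, misplaced the difficulty. Linear independence of the shifted annihilators $w_0,\dots,w_n$ is never needed in the Cayley--Hamilton route. That concern belongs to a dimension-counting variant, where $n+1$ independent annihilators force, by pigeonhole on the $n$ state rows, a nontrivial combination with zero state part. Here the coefficients are fixed with $\alpha_n=1$, so the state part $\xi^\top\chi(A)$ is identically zero. PE then gives $\sum_i\alpha_iw_i=0$ outright, and what remains is a triangular read-off:
\begin{itemize}
\item Write $\eta=\mathrm{col}(\eta_0,\dots,\eta_{L-1})$. Input block $n+l$ of the combined vector is $\eta_l^\top+\sum_{i<n}\alpha_i\eta_{n+l-i}^\top$, omitting terms with index above $L-1$. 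Back-substituting from $l=L-1$ down to $l=0$ gives $\eta=0$.
\item With $\eta=0$, input block $p<n$ reduces to $\sum_{i=p+1}^{n}\alpha_i\xi^\top A^{i-1-p}B$. This is again unit-triangular, so $\xi^\top A^jB=0$ for $j=0,\dots,n-1$, and $\xi=0$ by controllability.
\end{itemize}
The only genuine bookkeeping is the column count. The $i$-th shifted relation uses columns $k+i$, so all $w_i$ annihilate the first $M-n$ columns. That is exactly the width of $\mathcal{H}_{L+n}(u)$, which is why PE of order $L+n$ is the right hypothesis.
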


\subsection{Subspace Predictive Control (SPC)}
To circumvent the $\mathcal{O}(M^3)$ online computational scaling, SPC extracts fixed prediction matrices offline via an ordinary least-squares (OLS) regression \cite{Favoreeletal1999}. Assuming $\Phib$ has full row rank, SPC solves $\min_\Theta \| \Yf - \Theta \Phib \|_F^2$, yielding:
\begin{equation}
    \Theta^{*} = \Yf\Phib^\dagger = \Yf\Phib^{\!\top}\!(\Phib\Phib^{\!\top})^{-1}.
    \label{eq:spc_theta}
\end{equation}
The matrix $\Theta^*$ is partitioned into $\Theta^* = [P_1^\mathrm{spc}, P_2^\mathrm{spc}]$, corresponding to the dimensions of $\Zp$ and $\Uf$. The online SPC problem then reduces to a standard quadratic program:
\begin{subequations}\label{eq:spc}
\begin{align}
    \min_{\uN} \quad & J(\uN, \mathbf{y}_N) \\
    \mathrm{s.t.} \quad & \mathbf{y}_N = P_1^\mathrm{spc}\zini + P_2^\mathrm{spc}\uN, \label{eq:spc_pred} \\
    & (\uN, \mathbf{y}_N) \in \mathbb{U}^N \times \mathbb{Y}^N.
\end{align}
\end{subequations}
While \eqref{eq:spc} matches the online computational efficiency of model-based MPC, the unregularized pseudo-inverse in \eqref{eq:spc_theta} is acutely sensitive to process noise (which induces errors-in-variables bias) and rank deficiency (when $M < T_h$ in adaptive sliding windows).

\section{The Projection-regularized Predictor}
\label{sec:prpc}
% ==============================================================

%\subsection{Problem Formulation}

To mitigate the adverse effects of process and measurement noise while preserving online computational efficiency, we formulate the data-driven predictor via a regularized projection. Specifically, the PRPC predictor is defined by the following constrained optimisation problem:
\begin{equation}
    \min_{g\in\R^M}\ \frac{1}{2}\|\Zp g-\zini\|_2^2+\frac{\lambda}{2}\|g\|_2^2
  \quad\text{s.t.}\quad \Uf g=\uN.
    \label{eq:prpc}
\end{equation}

The formulation in \eqref{eq:prpc} relaxes the initial-condition matching, $\Zp g \approx \zini$, to prevent overfitting to the noise subspace inherent in the measured initial trajectory. Conversely, the future-input constraint, $\Uf g = \uN$, is strictly enforced since $\uN$ constitutes the deterministic decision variables optimised by the controller. The Regularization parameter $\lambda > 0$ balances the initial-condition fidelity against the norm of the latent trajectory $g$.

By introducing the Lagrange multiplier $\nu\in\R^{n_uN}$, the Karush-Kuhn-Tucker (KKT) conditions for \eqref{eq:prpc} yield the saddle-point system:
\begin{equation}
  \begin{bmatrix}\mathbf{H}&\Uf^{\!\top}\\\Uf&\mathbf{0}\end{bmatrix}
  \begin{bmatrix}g^{\star}\\\nu^{\star}\end{bmatrix}
  =\begin{bmatrix}\Zp^{\!\top}\zini\\\uN\end{bmatrix},
  \qquad \mathbf{H}:=\Zp^{\!\top}\Zp+\lambda I_M.
  \label{eq:kkt}
\end{equation}
For any $\lambda > 0$, the matrix $\mathbf{H} \succ 0$. Eliminating $\nu^{\star}$ yields the optimal weighting vector $g^{\star} = G_1\zini + G_2\uN$, with the matrix operators defined as:
\begin{align}
  G_1&=\big(\mathbf{H}^{-1}-\mathbf{H}^{-1}\Uf^{\!\top}\tilde S^{-1}
  \Uf\mathbf{H}^{-1}\big)\Zp^{\!\top},\nonumber \\
  G_2&=\mathbf{H}^{-1}\Uf^{\!\top}\tilde S^{-1},
  \label{eq:G}
\end{align}
where $\tilde S := \Uf\mathbf{H}^{-1}\Uf^{\!\top}$. The future output prediction is then generated via the linear mapping $\hat{\mathbf{y}}_N = P_1\zini+P_2\uN$, where $(P_1,P_2)=(\Yf G_1,\Yf G_2)$. 

Consequently, the online PRPC problem retains the exact dimension and structure of a classical model-based MPC quadratic program.

\subsection{Asymptotic Equivalence to Subspace Predictive Control}

We first establish the theoretical connection between PRPC and standard indirect DPC methods. Specifically, PRPC recovers unregularized SPC in the limit of vanishing Regularization, proving it is a strict generalisation.

\begin{theorem}[PRPC--SPC Equivalence]\label{thm:equiv}
Assume the regressor matrix $\Phib$ possesses full row rank. Then, the PRPC predictor matrices converge to the SPC matrices as Regularization vanishes:
$$ \lim_{\lambda\to 0^{+}}(P_1,P_2)=(P_1^{\mathrm{spc}},P_2^{\mathrm{spc}}). $$
\end{theorem}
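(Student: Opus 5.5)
The plan is to avoid manipulating the closed-form expressions \eqref{eq:G} as $\lambda\to 0^+$, since $\mathbf{H}^{-1}$ blows up on $\ker \Zp$. Instead I will work with the variational characterization \eqref{eq:prpc} directly.

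\textbf{Setup.} Fix an arbitrary data vector $b:=\mathrm{col}(\zini,\uN)\in\R^{T_h}$. Write $g^\star(\lambda)=G_1(\lambda)\zini+G_2(\lambda)\uN$ for the unique minimizer of \eqref{eq:prpc}; uniqueness holds because $\mathbf{H}\succ 0$. Define the candidate limit $\bar g:=\Phib^\dagger b=\Phib^{\!\top}(\Phib\Phib^{\!\top})^{-1}b$. Since $\Phib$ has full row rank, $\Phib\bar g=b$. In particular $\Uf\bar g=\uN$, so $\bar g$ is feasible for \eqref{eq:prpc}, and it attains zero initial-condition residual, $\Zp\bar g=\zini$. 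Moreover, $\bar g$ is the unique minimum-norm solution of $\Phib g=b$, because it lies in the row space of $\Phib$.

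\textbf{Key estimate.} Compare objective values at the optimizer and at the feasible point $\bar g$:
\begin{equation*}
\tfrac12\|\Zp g^\star(\lambda)-\zini\|_2^2+\tfrac{\lambda}{2}\|g^\star(\lambda)\|_2^2\le \tfrac{\lambda}{2}\|\bar g\|_2^2 .
\end{equation*}
Two consequences follow immediately.
\begin{itemize}
\item Dropping the first term gives $\|g^\star(\lambda)\|_2\le\|\bar g\|_2$, uniformly in $\lambda>0$.
\item Dropping the second term gives $\|\Zp g^\star(\lambda)-\zini\|_2^2\le\lambda\|\bar g\|_2^2\to 0$.
\end{itemize}

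\textbf{Identifying the limit.} Take any sequence $\lambda_k\to0^+$. By boundedness, $g^\star(\lambda_k)$ has a convergent subsequence with some limit $\tilde g$.
\begin{itemize}
\item Passing to the limit in the residual bound and in the constraint gives $\Zp\tilde g=\zini$ and $\Uf\tilde g=\uN$, i.e.\ $\Phib\tilde g=b$.
\item Passing to the limit in the norm bound gives $\|\tilde g\|_2\le\|\bar g\|_2$.
\end{itemize}
Since $\bar g$ is the unique minimum-norm element of $\{g:\Phib g=b\}$, we conclude $\tilde g=\bar g$. Every subsequence therefore has the same limit, so $g^\star(\lambda)\to\Phib^\dagger b$ as $\lambda\to0^+$.

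\textbf{Matrix conclusion.} The map $b\mapsto g^\star(\lambda)$ is linear, with matrix $[G_1(\lambda),\,G_2(\lambda)]$. Applying the previous step to each standard basis vector $b=e_i$ yields columnwise, hence entrywise, convergence $[G_1,G_2]\to\Phib^\dagger$. Multiplying by the fixed matrix $\Yf$ gives $(P_1,P_2)=\Yf[G_1,G_2]\to\Yf\Phib^\dagger=\Theta^*=[P_1^{\mathrm{spc}},P_2^{\mathrm{spc}}]$, which is \eqref{eq:spc_theta} partitioned conformably with $\Zp$ and $\Uf$.

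\textbf{Main obstacle.} The delicate point is the identification of the limit. The bound $\|g^\star\|\le\|\bar g\|$ must be combined with the fact that the minimum-norm solution of the consistent system $\Phib g=b$ is unique; this is what rules out other limits that differ from $\bar g$ by an element of $\ker\Phib$. This is exactly where full row rank of $\Phib$ is used: it guarantees that $\bar g$ is feasible with zero residual. A direct route through \eqref{eq:G} would instead require a singular perturbation expansion of $\mathbf{H}^{-1}$ and of $\tilde S^{-1}$, which the variational argument avoids.
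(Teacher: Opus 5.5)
Your proof is correct and follows the same route as the paper. Both identify the $\lambda\to0^{+}$ limit of the PRPC weight vector as the minimum-norm solution $\Phib^{\dagger}\mathrm{col}(\zini,\uN)$ of $\Phib g=\mathrm{col}(\zini,\uN)$, which attains zero residual because $\Phib$ has full row rank, and then multiply by $\Yf$. Your comparison inequality against $\bar g$, the compactness-plus-uniqueness argument, and the basis-vector step supply justifications the paper only asserts: that the regularized minimizers actually converge to this minimum-norm selection, and that convergence for every data vector yields convergence of the matrices $(P_1,P_2)$.
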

\begin{proof}
For every $\lambda>0$ the strongly convex penalty $\tfrac{\lambda}{2}\|g\|_2^2$ makes the minimizer unique; as $\lambda\to 0^{+}$ its magnitude vanishes but it persists as a \emph{minimum-norm selection} among the minimizers of the residual. Because $\Phib$ has full row rank, there exists a weight vector satisfying simultaneously $\Zp g=\zini$ and $\Uf g=\uN$, i.e.\ $\Phib g=[\zini^\top,\uN^\top]^\top$; on the feasible set the relaxed residual $\|\Zp g-\zini\|_2^2$ therefore \emph{attains} its global lower bound of zero. The limiting problem is thus $\min_g\|g\|_2^2$ subject to $\Phib g=[\zini^\top,\uN^\top]^\top$, whose unique solution is the Moore-Penrose mapping $g_0^{\star}=\Phib^{\!\top}(\Phib\Phib^{\!\top})^{-1}[\zini^\top,\uN^\top]^\top$. Substituting into the output equation yields $\Yf g_0^{\star}=\Theta^{\star}_{\mathrm{spc}}[\zini^\top,\uN^\top]^\top$, which identically matches the SPC prediction. When $M<T_h$, $\Phib$ loses full row rank and SPC is undefined, whereas the PRPC predictor remains well posed for every $\lambda>0$.
\end{proof}

Thus, PRPC introduces no conservatism relative to SPC, while $\lambda$ provides a crucial degree of freedom for robustification.

\subsection{Exact Covariance Collapse and Efficient Computation}

Solving the KKT system \eqref{eq:kkt} directly requires inverting the $M \times M$ matrix $\mathbf{H}$, incurring an $\mathcal{O}(M^3)$ computational cost. For large datasets, this is prohibitive and prone to numerical ill-conditioning as $\lambda \to 0$. We overcome this by analytically collapsing the PRPC predictor onto fixed-dimension data covariance matrices:
\begin{align}
  \Spp&=\Zp\Zp^{\!\top},\ \Sup=\Uf\Zp^{\!\top},\ \Syp=\Yf\Zp^{\!\top},\ \nonumber \\
  \Suu&=\Uf\Uf^{\!\top},\ \Syu=\Yf\Uf^{\!\top}.
  \label{eq:covs}
\end{align}

\begin{proposition}[Analytical Predictor Collapse]\label{prop:collapse}
For any $\lambda>0$, the PRPC predictor matrices $(P_1, P_2)$ derived from \eqref{eq:G} are algebraically equivalent to the condensed form:
\begin{align}
  P_1 &= \Syp\Wp-\Phi_{yu}\,S^{-1}\Sup\Wp, \label{eq:P1}\\
  P_2 &= \Phi_{yu}\,S^{-1}, \label{eq:P2}
\end{align}
where $\Wp=(\Spp+\lambda I)^{-1}$, $S=\lambda^{-1}\big(\Suu-\Sup\Wp\Sup^{\!\top}\big)$, and $\Phi_{yu}=\lambda^{-1}\big(\Syu-\Syp\Wp\Sup^{\!\top}\big)$. 
Furthermore, the dominant offline computational complexity of \eqref{eq:P1}--\eqref{eq:P2} is $\mathcal{O}(T_z^3+T_z^2M)$, rendering the cubic inversions completely independent of the dataset length $M$.
\end{proposition}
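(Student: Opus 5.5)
The plan is to express every $M\times M$ quantity in \eqref{eq:G} through the push-through identity and the Woodbury formula applied to $\mathbf{H}=\Zp^{\!\top}\Zp+\lambda I_M$. The key identity is
\begin{equation*}
\mathbf{H}^{-1}=\lambda^{-1}\big(I_M-\Zp^{\!\top}\Wp\Zp\big),\qquad \Wp=(\Spp+\lambda I)^{-1}.
\end{equation*}
It follows from $\Zp\Zp^{\!\top}=\Spp$ and can be checked directly by multiplying by $\mathbf{H}$. The companion identity is $\mathbf{H}^{-1}\Zp^{\!\top}=\Zp^{\!\top}\Wp$, which follows from $\Zp^{\!\top}(\Spp+\lambda I)=\mathbf{H}\Zp^{\!\top}$. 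With these two facts, every product in \eqref{eq:G}, once left-multiplied by $\Yf$ or $\Uf$, collapses onto the covariances \eqref{eq:covs}.

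First I would compute the Schur complement. The identity gives
\begin{equation*}
\tilde S=\Uf\mathbf{H}^{-1}\Uf^{\!\top}=\lambda^{-1}\big(\Suu-\Sup\Wp\Sup^{\!\top}\big)=S,
\end{equation*}
and its invertibility is inherited from $\mathbf{H}\succ0$, given that $\Uf$ has full row rank (implicit in \eqref{eq:G}). Second, I would compute $P_2=\Yf\mathbf{H}^{-1}\Uf^{\!\top}\tilde S^{-1}$. The same identity gives $\Yf\mathbf{H}^{-1}\Uf^{\!\top}=\lambda^{-1}(\Syu-\Syp\Wp\Sup^{\!\top})=\Phi_{yu}$, and hence \eqref{eq:P2}. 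Third, for $P_1$ I would split \eqref{eq:G} into the two terms $\Yf\mathbf{H}^{-1}\Zp^{\!\top}$ and $\Yf\mathbf{H}^{-1}\Uf^{\!\top}\tilde S^{-1}\Uf\mathbf{H}^{-1}\Zp^{\!\top}$. By the companion identity the first equals $\Yf\Zp^{\!\top}\Wp=\Syp\Wp$. The second equals $\Phi_{yu}S^{-1}\Uf\Zp^{\!\top}\Wp=\Phi_{yu}S^{-1}\Sup\Wp$. Together these give \eqref{eq:P1}.

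For the complexity claim, I would count operations. Forming the covariances \eqref{eq:covs} costs $\mathcal{O}(T_h^2M)$, which is $\mathcal{O}(T_z^2M)$ when $n_uN$ is treated as comparable to $T_z$. Inverting $\Spp+\lambda I$ costs $\mathcal{O}(T_z^3)$. The remaining products and the inversion of the $n_uN\times n_uN$ matrix $S$ involve only fixed dimensions. No step factorizes an $M$-dimensional matrix, so the cubic cost is independent of $M$.

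The main obstacle is not the algebra, which is routine once the push-through identity is in place. It is being careful about well-posedness, namely that $S$ is invertible. For that I would argue $S=\Uf\mathbf{H}^{-1}\Uf^{\!\top}\succ0$ whenever $\Uf$ has full row rank. I would also note that the stated complexity implicitly treats $n_uN$ as comparable to $T_z$ (otherwise an $\mathcal{O}((n_uN)^3)$ term should be added).
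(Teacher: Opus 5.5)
Your proposal is correct and follows essentially the same route as the paper's proof. Both use the push-through identity $\mathbf{H}^{-1}\Zp^{\!\top}=\Zp^{\!\top}\Wp$ and the identity $\mathbf{H}^{-1}=\lambda^{-1}(I_M-\Zp^{\!\top}\Wp\Zp)$ to reduce $\tilde S$, $\Yf\mathbf{H}^{-1}\Uf^{\!\top}$, and the two terms of $P_1$ to the covariance blocks. Your remark that $S\succ0$ when $\Uf$ has full row rank is a sound refinement the paper's appendix leaves implicit, and so is your note on the omitted $\mathcal{O}((n_uN)^3)$ and $\mathcal{O}(T_h^2M)$ costs.
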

\begin{proof}
See Appendix~\ref{app:collapse}.
\end{proof}

\begin{remark}[Implications of the Covariance Formulation]
The dimensional collapse in Proposition~\ref{prop:collapse} extracts the predictor exclusively through the fixed-size covariance matrices \eqref{eq:covs}. This compression enables the memory-efficient adaptation scheme proposed in Section~\ref{sec:adaptive}. Furthermore, bypassing the $M\times M$ inversion ensures numerical stability even for vanishingly small Regularization parameters.
\end{remark}

% ==============================================================
\section{Bias--Variance Analysis and Predictor Optimality}
\label{sec:biasvar}
% ==============================================================

To rigorously justify the introduction of the Regularization parameter $\lambda$, we analyze the PRPC predictor through the lens of classical estimation theory. By treating the data-driven predictor as an estimator of the true multi-step mapping $\Thetatrue$, we evaluate the mean-squared error (MSE), defined as $\mathrm{MSE}(\Thetahat) = \E\|\Thetahat-\Thetatrue\|_F^2 = \|\mathrm{bias}\|^2 + \mathrm{variance}$, to establish the strict conditions under which $\lambda > 0$ yields a superior predictor.

Here $\Thetatrue$ denotes the \emph{ground-truth} predictor, i.e., the multi-step map obtained by regressing $\Yf$ on $\Phib$ from an arbitrarily long \emph{noise-free} trajectory of the true system; it is the target that any data-driven predictor estimates from finite, noisy data. The expectation $\E[\cdot]$ is taken over the noise realizations of the offline dataset and is estimated in Section~\ref{sec:numerics} by Monte-Carlo averaging. Throughout, the figures report the \emph{normalized} MSE ratio $\mathrm{MSE}(\lambda)/\mathrm{MSE}(0^{+})$, whose denominator is the error of the unregularized predictor $\lambda\to0^{+}$; by Theorem~\ref{thm:equiv} this baseline is exactly SPC. A ratio below unity therefore quantifies the reduction in predictor error achieved by regularization relative to SPC, independently of problem scaling.

\subsection{Measurement Noise Regime: Asymptotic Optimality of SPC}
\label{sec:meas}

When the system is exclusively subject to measurement noise ($\sigma_w=0$), the noise-free regressor $\Phib^{0}=[\Up^0;\Yp^0;\Uf^0]$ is deterministic. The prediction model follows a standard multivariate regression $\Yf=\Thetatrue\Phib^{0}+\Delta\Yf$, where $\Delta\Yf$ isolates future measurement noise. 

Expressing the regression via the noise-corrupted observation $\Phib=\Phib^0+\Delta\Phib$, the effective residual $W=\Delta\Yf-\Theta_y\Delta\Yp$ exhibits correlation with $\Phib$ exclusively through the shared $\Delta\Yp$ block. As demonstrated in classical regression, this induces an ordinary least-squares (OLS) attenuation bias.

\begin{proposition}[Near-Optimality of SPC under Measurement Noise]
\label{prop:meas}
Assume $\sigma_w=0$ and the input sequence is persistently exciting. The unregularized SPC estimator satisfies:
\begin{equation}
  \E[\Thetahat_{\mathrm{spc}}] = \Thetatrue\,\Sigma_{\Phib^0}(\Sigma_{\Phib^0}+\Sigma_\Delta)^{-1} + \mathcal{O}(1/M),
\end{equation}
where the noise covariance $\Sigma_\Delta$ is supported strictly on the rows corresponding to $\Yp$. Consequently, the optimal Regularization parameter vanishes asymptotically, $\lambda^{\star}\to 0$ as $M\to\infty$: SPC is near-optimal under pure measurement noise, and exactly optimal only in the infinite-data limit.
\end{proposition}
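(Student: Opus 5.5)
We write the SPC estimator in normalized-covariance form, $\Thetahat_{\mathrm{spc}} = (\tfrac1M\Yf\Phib^{\!\top})(\tfrac1M\Phib\Phib^{\!\top})^{-1}$, and substitute $\Yf=\Thetatrue\Phib^0+\Delta\Yf$ and $\Phib=\Phib^0+\Delta\Phib$. Here $\Delta\Phib=[0;\Delta\Yp;0]$ because $\sigma_w=0$ and the inputs are noise-free. Persistent excitation makes $\Sigma_{\Phib^0}:=\lim\tfrac1M\Phib^0\Phib^{0\top}$ positive definite, so the sample Gram matrix is invertible with high probability for large $M$. Define $\Sigma_\Delta:=\E[\tfrac1M\Delta\Phib\Delta\Phib^{\!\top}]$. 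Its only nonzero block is $\E[\tfrac1M\Delta\Yp\Delta\Yp^{\!\top}]=I_{T_\mathrm{ini}}\otimes\Sigma_v$, and it sits on the $\Yp$ rows, which gives the support claim directly.

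\textbf{Step 1 (numerator).} We take the expectation of $\tfrac1M\Yf\Phib^{\!\top}=\tfrac1M(\Thetatrue\Phib^0+\Delta\Yf)(\Phib^0+\Delta\Phib)^{\!\top}$.
The cross terms $\Thetatrue\Phib^0\Delta\Phib^{\!\top}$ and $\Delta\Yf\Phib^{0\top}$ have zero mean, since $\Phib^0$ is deterministic and the noise is zero-mean.
The term $\tfrac1M\E[\Delta\Yf\Delta\Yp^{\!\top}]$ vanishes because $v$ is white. Within each Hankel column the future and past measurement-noise samples refer to disjoint time indices. Across different columns, overlapping entries contribute only $\mathcal O(L/M)$ after normalization.
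The numerator mean is therefore $\Thetatrue\Sigma_{\Phib^0}+\mathcal O(1/M)$. Note that it is $\Thetatrue\Sigma_{\Phib^0}$ and not $\Thetatrue(\Sigma_{\Phib^0}+\Sigma_\Delta)$. This is exactly the source of the attenuation: the true map acts on the clean regressor, while the Gram matrix contains the noise.

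\textbf{Step 2 (denominator and ratio).} The sample Gram matrix satisfies $\tfrac1M\Phib\Phib^{\!\top}=\Sigma_{\Phib^0}+\Sigma_\Delta+E_M$, where $E_M$ is zero-mean apart from $\mathcal O(1/M)$ Hankel-overlap terms and has fluctuations of order $M^{-1/2}$. Gaussian fourth moments bound its covariance by $\mathcal O(1/M)$.
We expand the inverse to second order: $(\Sigma+E_M)^{-1}=\Sigma^{-1}-\Sigma^{-1}E_M\Sigma^{-1}+\Sigma^{-1}E_M\Sigma^{-1}E_M\Sigma^{-1}-\dots$.
We then take the expectation of the product of numerator and inverse. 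The first-order term contributes zero mean plus $\mathcal O(1/M)$ from the overlaps. Numerator–denominator correlations and the second-order term both give $\mathcal O(\E\|E_M\|^2)=\mathcal O(1/M)$.
This yields $\E[\Thetahat_{\mathrm{spc}}]=\Thetatrue\Sigma_{\Phib^0}(\Sigma_{\Phib^0}+\Sigma_\Delta)^{-1}+\mathcal O(1/M)$.

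\textbf{Main obstacle.} The hard step is making this expansion rigorous, since the expectation of an inverse need not exist. I would handle it by restricting to the high-probability event $\|E_M\|\le\tfrac12\lambda_{\min}(\Sigma_{\Phib^0})$. Gaussian concentration for quadratic forms in the Hankel-structured noise bounds the complement's probability by $e^{-cM}$. On that complement I would either use a truncation convention or note that the regularized inverse is bounded, and I would assume finite moments there.

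\textbf{Step 3 ($\lambda^\star\to0$).} By Theorem~\ref{thm:equiv} and Proposition~\ref{prop:collapse}, PRPC is a smooth perturbation of SPC in $\lambda$, and its predictor depends on $\lambda$ only through $\lambda/M$ relative to the normalized covariances.
I would write the ridge estimator as $\tfrac1M\Yf\Phib^{\!\top}(\tfrac1M\Phib\Phib^{\!\top}+\tfrac{\lambda}{M}D)^{-1}$, where $D$ is the penalty structure, and expand the MSE in $\lambda/M$. This gives $\mathrm{MSE}(\lambda)=\mathrm{MSE}(0)+a\,\lambda/M+b\,(\lambda/M)^2+\mathcal O(1/M)\cdot$ terms. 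The linear coefficient $a$ collects the variance reduction (of size $\mathcal O(1/M)$) and the cross term with the bias.
Minimizing in $\lambda$ gives an optimizer $\lambda^\star/M\to0$. Equivalently, the effective regularization vanishes, because the ridge shrinkage cannot undo the errors-in-variables attenuation: shrinkage pulls $\Thetahat$ further toward zero rather than toward $\Thetatrue$. Only the $\mathcal O(1/M)$ variance term can be traded off, so any benefit of regularization disappears as $M\to\infty$, and SPC is optimal only in the limit.
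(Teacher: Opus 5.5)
Your Steps 1--2 follow the paper's own computation. The paper writes $\Yf=\Thetatrue\Phib+(\Delta\Yf-\Thetatrue\Delta\Phib)$ and replaces $(\Phib\Phib^{\!\top})^{-1}$ by $(\Sigma_{\Phib^0}+\Sigma_\Delta)^{-1}$; your second-order expansion on a high-probability event is a more careful version of the same idea. One correction is needed there. Persistent excitation does \emph{not} make $\Sigma_{\Phib^0}$ positive definite: the noise-free rows of $\Yp$ are linear combinations of the rows of $\Up$ and of the $n$ initial states, so $\mathrm{rank}\,\Phib^0\le n_uL+n<T_h$ whenever $n_yT_\mathrm{ini}>n$. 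Expand around, and truncate with, $\lambda_{\min}(\Sigma_{\Phib^0}+\Sigma_\Delta)$ instead; it is positive because the inputs are persistently exciting and $\Sigma_v\succ0$. Also take $\Sigma_{\Phib^0}:=\tfrac1M\Phib^0\Phib^{0\top}$ at finite $M$ rather than its limit, whose convergence rate need not be $1/M$.

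The genuine gap is Step 3. The paper gets $\lambda^\star\to0$ by treating $\Sigma_{\Phib^0}$ as growing like $M$ while $\Sigma_\Delta$ stays bounded, so the attenuation is itself $\mathcal O(1/M)$. That shortcut is unavailable in your normalization, which is the consistent one: $\E[\Delta\Phib\Delta\Phib^{\!\top}]$ grows like $M$ exactly as $\Phib^0\Phib^{0\top}$ does. So SPC keeps an $\mathcal O(1)$ bias. With $K:=\Sigma_{\Phib^0}+\Sigma_\Delta$, everything hinges on the sign of the $\mathcal O(1)$ part of your coefficient $a$, namely $a_0=2\langle\Thetatrue\Sigma_\Delta K^{-1},\,\Thetatrue\Sigma_{\Phib^0}K^{-1}DK^{-1}\rangle_F$. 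You only assert $a_0\ge0$ (``shrinkage pulls $\Thetahat$ further toward zero''), and this is false in general, for two reasons.
\begin{itemize}
\item PRPC is exactly the partial ridge with $D=\mathrm{diag}(I_{T_z},0)$, which does not shrink $\Thetahat$ toward zero: $P_2\to\Syu\Suu^{-1}$ as $\lambda\to\infty$.
\item When a noisy regressor is correlated with a clean one, errors-in-variables inflates the clean coefficient, and shrinkage removes part of that bias. For instance, take $\Sigma_{\Phib^0}=\left[\begin{smallmatrix}1&\rho\\ \rho&1\end{smallmatrix}\right]$, unit noise variance on the first coordinate, $\Thetatrue=[1,\,0]$, $D=I$ and $\rho=0.9$; this gives $a_0<0$. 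This correlated structure is the generic one here, since $\Yp^0$ is a linear function of $\Up^0$ and the state.
\end{itemize}
If $a_0<0$, then $\lambda^\star/M$ stays bounded away from zero, so $\lambda^\star\to\infty$.

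Even if $a_0>0$, your conclusion $\lambda^\star/M\to0$ is not ``equivalent'' to the claimed $\lambda^\star\to0$. For a single coefficient $\theta$ with unbiased OLS and residual variance $\sigma^2$, the ridge optimum is $\lambda^\star=\sigma^2/\theta^2$, independent of $M$. To reach the stated claim you would need $\partial_\lambda\MSE|_{\lambda=0}=a_0/M+\mathcal O(1/M^2)>0$, i.e.\ the $\mathcal O(1/M)$ bias term beats the $\mathcal O(1/M^2)$ variance term. You would also need a global argument excluding a better minimizer at large $\lambda$, which a local expansion in $\lambda/M$ cannot supply.
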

\begin{proof}
See Appendix~\ref{app:meas}.
\end{proof}

Analytically, because the signal covariance grows proportionally to the dataset length $M$, the variance of the OLS estimator asymptotically vanishes. Introducing any $\lambda > 0$ artificially injects bias without providing a commensurate reduction in variance, rendering unregularized SPC mathematically optimal for pure measurement noise.

\subsection{Process Noise Regime: Errors-in-Variables and $\lambda^{\star}>0$}
\label{sec:proc}

The paradigm fundamentally shifts under process noise ($\sigma_w>0$). Propagated process noise, $\eta(k)=C\sum_{j<k}A^{k-1-j}w(j)$, persistently contaminates historical outputs, injecting a structured, non-vanishing noise term into the regressor covariance:
\begin{equation*}
  \Sigma_{\Delta\Phib} = \diag\big(\mathbf{0},\ \sigma_v^2 I+\sigma_w^2\Gamma_w,\ \mathbf{0}\big),
\end{equation*}
where $[\Gamma_w]_{ij}=CA^{|i-j|}P_\infty C^{\!\top}$ and $P_\infty$ denotes the discrete-time controllability Gramian. This transforms the predictor extraction into a classical errors-in-variables (EIV) problem.

\begin{proposition}[Inconsistency of the unregularized Predictor]
\label{prop:atten}
Under regressor contamination satisfying $\E[\Delta\Phib]=0$ with $\Sigma_{\Delta\Phib}\ne 0$, the standard SPC estimator is statistically inconsistent. Its asymptotic expectation converges to:
\begin{equation*}
  \lim_{M\to\infty} \E[\Thetahat_{\mathrm{spc}}] = \Thetatrue\,\Sigma_{\Phib^0}\big(\Sigma_{\Phib^0}+\Sigma_{\Delta\Phib}\big)^{-1} \neq \Thetatrue.
\end{equation*}
\end{proposition}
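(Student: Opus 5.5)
The plan is to write the SPC estimator in normalized covariance form and pass to the limit with a law of large numbers, followed by Slutsky's theorem. Set $\Phib=\Phib^0+\Delta\Phib$ and $\Yf=\Thetatrue\Phib^0+\Delta\Yf$, where $\Delta\Yf$ collects the future measurement noise and the future part of the propagated process noise. Then
\begin{equation*}
  \Thetahat_{\mathrm{spc}}=\Big(\tfrac{1}{M}\Yf\Phib^{\!\top}\Big)\Big(\tfrac{1}{M}\Phib\Phib^{\!\top}\Big)^{-1}.
\end{equation*}
Expanding both factors gives four empirical cross-moments for the numerator and four for the denominator. Each is a time average of a quadratic form in the Gaussian sequences $w,v$ and the deterministic (or independently generated) input.

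Next I would establish the entrywise limits:
\begin{itemize}
  \item $\tfrac1M\Phib^0\Phib^{0\top}\to\Sigma_{\Phib^0}\succ0$, using persistent excitation together with stationarity of the noise-free trajectory.
  \item $\tfrac1M\Delta\Phib\Delta\Phib^{\!\top}\to\Sigma_{\Delta\Phib}$, the block-diagonal covariance displayed above; here $\sigma_v^2I+\sigma_w^2\Gamma_w$ sits in the $\Yp$ rows, assuming the stable $A$ has converged to $P_\infty$.
  \item $\tfrac1M\Phib^0\Delta\Phib^{\!\top}\to0$ and $\tfrac1M\Delta\Yf\Phib^{0\top}\to0$, because the input is independent of the noise and the noise has zero mean.
\end{itemize}
The noise-on-noise term $\tfrac1M\Delta\Yf\Delta\Phib^{\!\top}$ requires the most care. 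The convention that makes the displayed limit exact is to count the part of $\Delta\Yf$ that is predictable from past noise as belonging to $\Thetatrue$. That is, $\Delta\Yf$ is taken as the innovation component, orthogonal to the past window, so this term tends to $0$. I would state this modeling convention explicitly, since it is what the formula in Proposition~\ref{prop:atten} encodes. All of these convergences follow from ergodicity of stable linear Gaussian processes, for example via mean-square bounds of order $\mathcal{O}(1/M)$ on the variance of sample autocovariances.

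Combining the limits, the numerator converges to $\Thetatrue\Sigma_{\Phib^0}$ and the denominator to $\Sigma_{\Phib^0}+\Sigma_{\Delta\Phib}\succ0$. Matrix inversion is continuous there, so the estimator converges in probability to $\Thetatrue\Sigma_{\Phib^0}(\Sigma_{\Phib^0}+\Sigma_{\Delta\Phib})^{-1}$. This is the same mechanism as Proposition~\ref{prop:meas}, except that $\Sigma_{\Delta\Phib}$ no longer vanishes relative to the signal.

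The main obstacle is upgrading convergence in probability to convergence of the expectation $\E[\Thetahat_{\mathrm{spc}}]$. The inverse sample covariance need not be integrable without further control. I would handle this by uniform integrability. Because $\Phib\Phib^{\!\top}\succeq\Phib^0\Phib^{0\top}$ up to cross terms, and the Gaussian regressor has a non-degenerate noise block, standard Wishart-type moment bounds give $\E\|(\tfrac1M\Phib\Phib^{\!\top})^{-1}\|^2<\infty$ uniformly for large $M$. If that route fails, the fallback is to interpret the limit as a probability limit, as is customary in errors-in-variables analyses.

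Finally, inconsistency follows because $\Sigma_{\Phib^0}(\Sigma_{\Phib^0}+\Sigma_{\Delta\Phib})^{-1}=I-\Sigma_{\Delta\Phib}(\Sigma_{\Phib^0}+\Sigma_{\Delta\Phib})^{-1}\neq I$ whenever $\Sigma_{\Delta\Phib}\neq0$. It remains to show that $\Thetatrue$ applied to this deviation is nonzero. This holds because the $\Yp$ columns of $\Thetatrue$ are nonzero for a system whose outputs are informative about the state.
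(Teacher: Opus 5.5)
Your route is the paper's: write $\Thetahat_{\mathrm{spc}}$ as a ratio of normalized sample covariances, take entrywise probability limits, invert by continuity, and read off the bias $\Thetatrue\Sigma_{\Delta\Phib}(\Sigma_{\Phib^0}+\Sigma_{\Delta\Phib})^{-1}$. The step that fails as written is your ``convention'' for the noise-on-noise term.

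Under the process-noise model of Section~\ref{sec:proc}, $\Delta\Yf=\Yf-\Thetatrue\Phib^0$ contains the future propagated noise $\eta_f$. Hence $\tfrac1M\Delta\Yf\Delta\Yp^{\!\top}\to R_{fp}$, whose blocks $\sigma_w^2CA^{\tau}P_\infty C^{\!\top}$ ($\tau\ge1$) are generically nonzero. The part of $\Delta\Yf$ correlated with the past window is $R_{fp}D^{-1}\Delta\Yp$, with $D:=\sigma_v^2I+\sigma_w^2\Gamma_w$. This is a function of the regressor noise and is independent of $\Phib^0$, so it cannot be absorbed into $\Thetatrue\Phib^0$ for any choice of $\Thetatrue$. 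Carrying it through, the estimator converges to
\begin{equation*}
  \big(\Thetatrue\Sigma_{\Phib^0}+[\,0,\ R_{fp},\ 0\,]\big)\big(\Sigma_{\Phib^0}+\Sigma_{\Delta\Phib}\big)^{-1},
\end{equation*}
which matches the displayed formula if and only if $R_{fp}=0$. Absorbing everything predictable from the window into the target would instead make SPC consistent for the population projection, with no attenuation at all. So the formula needs the explicit hypothesis $\tfrac1M\Delta\Yf\Delta\Phib^{\!\top}\to0$, the classical EIV assumption that output and regressor errors are uncorrelated. This holds for white measurement noise but not for $\eta$. The paper's proof uses the same hypothesis tacitly: it justifies $\tfrac1M\Delta\Yf\Phib^{\!\top}\to0$ by independence of future \emph{measurement} noise and ignores $\eta_f$. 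You should state it as an assumption, not present it as a relabeling.

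Two smaller corrections. First, your first bullet asserts $\Sigma_{\Phib^0}\succ0$, which is false in general. $\Yp^0$ is determined by the input rows and an $n$-dimensional state, so $\mathrm{rank}\,\Phib^0\le n_u(T_\mathrm{ini}+N)+n<T_h$ whenever $n_yT_\mathrm{ini}>n$ ($84<120$ in the Boeing example). Only $\Sigma_{\Phib^0}+\Sigma_{\Delta\Phib}\succ0$ is needed, and that follows from persistent excitation of the input rows together with $D\succ0$.

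Second, the same singularity undermines your uniform-integrability sketch. ``$\Phib\Phib^{\!\top}\succeq\Phib^0\Phib^{0\top}$ up to cross terms'' is not an inequality, and the would-be lower bound is singular anyway. The Hankel-structured noise on the $\Yp$ rows is also not Wishart, so a bound on $\E\|(\tfrac1M\Phib\Phib^{\!\top})^{-1}\|^2$ would need a genuine argument. The paper does not give one: its appendix proves only the probability limit. Your fallback is therefore exactly what is established there, and the $\lim\E$ in the statement is in effect a $\mathrm{plim}$.

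Your final step is sharper than the paper's ``strictly attenuates'' remark. You note that the bias is nonzero if and only if the $\Yp$ block of $\Thetatrue$ is nonzero, since $\Sigma_{\Delta\Phib}$ is supported on that block with $D\succ0$.
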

\begin{proof}
See Appendix~\ref{app:inconsistency}.
\end{proof}

\begin{proposition}[Existence of an Optimal $\lambda^{\star}>0$]
\label{prop:eiv}
Under the EIV formulation, there exists a Regularization parameter $\lambda^{\star}>0$ such that $\mathrm{MSE}(\Thetahat_{\mathrm{prpc}}(\lambda^{\star})) < \mathrm{MSE}(\Thetahat_{\mathrm{spc}})$. Furthermore, $\lambda^{\star}$ is monotonically non-decreasing with respect to the noise ratio $\sigma_w/\sigma_v$.
\end{proposition}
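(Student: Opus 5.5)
The plan is to treat $\mathrm{MSE}(\lambda):=\E\|\Thetahat_{\mathrm{prpc}}(\lambda)-\Thetatrue\|_F^2$ as a function of $\lambda\ge 0$. By Theorem~\ref{thm:equiv}, $\mathrm{MSE}(0^{+})=\mathrm{MSE}(\Thetahat_{\mathrm{spc}})$. It therefore suffices to show that the right derivative $\tfrac{d}{d\lambda}\mathrm{MSE}(\lambda)|_{\lambda=0^{+}}$ is strictly negative. Continuity then gives some $\lambda^\star>0$ with strictly smaller MSE.

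To make the derivative tractable, I would work with the joint ridge surrogate suggested by Proposition~\ref{prop:collapse}:
\[
\Thetahat(\lambda)=\Yf\Phib^{\!\top}(\Phib\Phib^{\!\top}+\lambda D)^{-1},
\]
where $D$ is the effective penalty matrix on the $\Zp$ rows. The collapse formulas \eqref{eq:P1}--\eqref{eq:P2} show the regularization enters only through $\Wp=(\Spp+\lambda I)^{-1}$. Up to $\mathcal{O}(\lambda^2)$, this is a ridge penalty acting on the past block.

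I would then replace empirical covariances by their large-$M$ limits, $\Phib\Phib^{\!\top}/M\to\Sigma_{\Phib^0}+\Sigma_{\Delta\Phib}=:\Sigma$, and rescale $\lambda=M\mu$. This gives the deterministic limit $\Thetabar(\mu)=\Thetatrue\Sigma_{\Phib^0}(\Sigma+\mu D)^{-1}$, with the same leading term as in Proposition~\ref{prop:atten}. The bias is $B(\mu)=\Thetatrue\big[\Sigma_{\Phib^0}(\Sigma+\mu D)^{-1}-I\big]$. At $\mu=0$ this equals $-\Thetatrue\Sigma_{\Delta\Phib}\Sigma^{-1}$, which is nonzero precisely because of the EIV term.

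Differentiating $\|B(\mu)\|_F^2$ gives
\[
-2\,\mathrm{tr}\big(B(0)^{\top}\Thetatrue\Sigma_{\Phib^0}\Sigma^{-1}D\Sigma^{-1}\big).
\]
The variance term, $\mathcal{O}(1/M)$ times $\mathrm{tr}\big(\Sigma_W(\Sigma+\mu D)^{-1}\Sigma(\Sigma+\mu D)^{-1}\big)$, is monotonically decreasing in $\mu$. So the variance contribution always has non-positive derivative, and the whole argument reduces to signing the bias derivative.

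Signing the bias derivative is the main obstacle. Substituting $B(0)$, the derivative becomes
\[
2\,\mathrm{tr}\big(\Sigma^{-1}\Sigma_{\Delta\Phib}^{\top}\Thetatrue^{\top}\Thetatrue\Sigma_{\Phib^0}\Sigma^{-1}D\Sigma^{-1}\big).
\]
This is not sign-definite in general, since shrinkage in the attenuated direction can overshoot. I would first try a whitened scalar reduction, diagonalizing $\Sigma_{\Phib^0}$ and $\Sigma_{\Delta\Phib}$ simultaneously on the $\Yp$ block, which is where $\Sigma_{\Delta\Phib}$ is supported. In each such direction the problem becomes scalar. SPC has already shrunk the coefficient by the factor $s/(s+\delta)$, so a further small ridge increases the bias there.

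The strict gain must therefore come from the correlation structure: the noise in $\Yp$ is correlated with $\Yf$ through the residual $W$. I would then exploit the fact that under process noise the effective residual $W=\Delta\Yf-\Theta_y\Delta\Yp$ has covariance $\Sigma_W$ inflated by $\sigma_w^2\Gamma_w$. This makes the variance decrease $-\tfrac{2}{M}\mathrm{tr}\big(\Sigma_W\Sigma^{-1}D\Sigma^{-1}\big)$ dominate the bias increase in the finite-$M$ regime the paper considers, in particular when $M$ is small or rank-deficient. Failing a clean general inequality, I would state the result under the explicit condition that this variance reduction exceeds the bias term, which holds for $\sigma_w>0$ and finite $M$.

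For monotonicity, I would define $\lambda^\star$ as the minimizer of the scalar first-order condition $\partial_\mu\mathrm{MSE}=0$. Writing $r=\sigma_w/\sigma_v$, $\Sigma_W$ and $\Sigma_{\Delta\Phib}$ grow with $r$, so the marginal variance benefit $\partial_\mu\mathrm{Var}$ becomes more negative. The objective then has decreasing differences in $(\mu,-r)$, and Topkis' monotone comparative statics theorem yields that $\lambda^\star(r)$ is non-decreasing. Alternatively, I would apply the implicit function theorem to the scalar first-order condition and sign $\partial\lambda^\star/\partial r$ directly.
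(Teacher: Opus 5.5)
Your route is the paper's own argument from Appendix~\ref{app:eiv}: split the MSE into squared bias plus variance, take the right derivative at $\lambda=0$, note that the bias slope is nonnegative because extra shrinkage pushes an already attenuated coefficient further from $\Thetatrue$, and let the strictly negative variance slope win. (Your surrogate is in fact exact, not only accurate to $\mathcal{O}(\lambda^2)$. With $D=\mathrm{diag}(I_{T_z},\mathbf{0})$, block inversion of $\Phib\Phib^{\!\top}+\lambda D$ about its $\Zp$ block reproduces the formulas of Proposition~\ref{prop:collapse}.)

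\textbf{The domination step.} The genuine gap is the step you flag yourself. Nothing in the argument shows that the variance decrease dominates, and your fallback claim that it ``holds for $\sigma_w>0$ and finite $M$'' is false. Carry your whitened scalar reduction through with signal moment $s^2$, contamination $\delta^2$ and residual variance $\sigma_W^2$. Then $\MSE(\mu)=\theta^2\bigl(1-\frac{s^2}{s^2+\delta^2+\mu}\bigr)^2+\frac{\sigma_W^2(s^2+\delta^2)}{M(s^2+\delta^2+\mu)^2}$ is a convex quadratic in $\tau=(s^2+\delta^2+\mu)^{-1}$, with minimizer $\mu^\star=\frac{\sigma_W^2(s^2+\delta^2)}{M\theta^2 s^2}-\delta^2$. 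Hence $\mu^\star>0$ if and only if $\sigma_W^2(s^2+\delta^2)>M\theta^2 s^2\delta^2$, and this fails at any fixed noise level once $M$ is large. The bias slope is $\mathcal{O}(1)$ and positive, while the variance gain is only $\mathcal{O}(1/M)$.

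Observe also that for $\delta^2=0$ the formula gives $\mu^\star=\sigma_W^2/(M\theta^2)>0$ unconditionally, which is the classical Hoerl--Kennard effect. It is precisely the EIV contamination that creates the positive bias slope able to cancel the variance gain. So this route proves existence only under an explicit domination hypothesis, equivalently a noise-dependent upper bound on $M$. The paper's appendix merely posits the same condition, and the proposition should be stated with it.

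\textbf{The monotonicity step.} This has a matching hole. Topkis' theorem needs $\partial^2\MSE/\partial\mu\,\partial r\le 0$ throughout, but you sign only the variance part. The bias part also depends on $r=\sigma_w/\sigma_v$ through $\Sigma_{\Delta\Phib}$ and $\Sigma$, and it moves the wrong way. From the scalar formula,
$\frac{d\mu^\star}{dr}=\frac{s^2+\delta^2}{M\theta^2 s^2}\frac{d\sigma_W^2}{dr}-\bigl(1-\frac{\sigma_W^2}{M\theta^2 s^2}\bigr)\frac{d\delta^2}{dr}$.
The second term is strictly negative on the nonempty range $\frac{\delta^2}{s^2+\delta^2}<\frac{\sigma_W^2}{M\theta^2 s^2}<1$, which is exactly where $\mu^\star>0$. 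More contamination on its own therefore lowers $\mu^\star$. Monotonicity needs a rate condition comparing how fast $\sigma_W^2$ and $\delta^2$ grow in $r$. Neither your decreasing-differences claim nor the paper's ``both effects push the zero-crossing to larger $\lambda$'' supplies it.
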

\begin{proof}
Evaluated at $\lambda=0$, the derivative of the bias term $\partial\|\mathrm{bias}\|^2/\partial\lambda|_{0}$ is bounded since the unregularized EIV estimator is inherently biased (Proposition~\ref{prop:atten}). Conversely, the variance derivative is strictly negative. When process noise is present, this variance reduction dominates the marginal bias injection, yielding $\partial\mathrm{MSE}/\partial\lambda|_0 < 0$ and guaranteeing a strictly positive optimal $\lambda^{\star}$.
\end{proof}

\subsection{Practical Implications: Data Scarcity and Hyperparameter Tuning}
\label{sec:practical_implications}

Classical ridge regression theory suggests Regularization is universally justified for small, ill-conditioned datasets. However, in data-driven predictive control, the regressor $\Phib$ and target $\Yf$ are extracted from identical underlying trajectories. Because their noise-free components are perfectly correlated and their noise components share measurement realizations, unregularized regression exploits this structural correlation. Consequently, under pure measurement noise, ill-conditioning alone does not necessitate $\lambda > 0$.

Conversely, when process noise and limited data are coupled, PRPC demonstrates profound superiority. The EIV bias correction established in Section~\ref{sec:proc} operates synergistically with the variance reduction of the ill-conditioned sample covariance. Furthermore, PRPC guarantees strict structural robustness in rank-deficient regimes.

\begin{proposition}[Existence in Rank-Deficient Regimes]\label{prop:structural}
The PRPC predictor is strictly well-posed and uniquely defined for any $\lambda>0$ provided that $M \ge n_uN$. In contrast, standard SPC requires $M \ge T_h$. 
\end{proposition}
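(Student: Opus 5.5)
The plan is to argue directly from the KKT system \eqref{eq:kkt} and the closed form \eqref{eq:G}, and to make explicit the implicit hypothesis that $\Uf$ has full row rank $n_uN$. Full row rank is possible only when $M\ge n_uN$, and it holds generically, for instance when the input is persistently exciting of order $N$ in the sense of Definition~\ref{def:pe}. I expect the main obstacle to be exactly this point: the counting condition $M\ge n_uN$ alone does not force $\mathrm{rank}(\Uf)=n_uN$. I would therefore state that the proposition is to be read with $\Uf$ of full row rank, and I would add a remark that if $\Uf$ is rank-deficient the constraint $\Uf g=\uN$ is infeasible for generic $\uN$, for any $\lambda$.

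\textbf{Step 1 (strict convexity and existence).} For $\lambda>0$ we have $\mathbf{H}=\Zp^{\!\top}\Zp+\lambda I_M\succeq\lambda I_M\succ 0$, regardless of the rank of $\Zp$. So the objective in \eqref{eq:prpc} is strongly convex on $\R^M$. If $\Uf$ has full row rank, the affine set $\{g:\Uf g=\uN\}$ is nonempty for every $\uN\in\R^{n_uN}$. A strongly convex function on a nonempty closed affine set has exactly one minimizer, so $g^\star$ exists and is unique.

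\textbf{Step 2 (invertibility of the KKT operators).} Next I would show that $\tilde S=\Uf\mathbf{H}^{-1}\Uf^{\!\top}$ is positive definite. For $x\neq 0$ we have $\Uf^{\!\top}x\neq 0$ by full row rank, and hence $x^\top\tilde S x=(\Uf^{\!\top}x)^\top\mathbf{H}^{-1}(\Uf^{\!\top}x)>0$. Since both $\mathbf{H}$ and its Schur complement are invertible, the saddle-point matrix in \eqref{eq:kkt} is nonsingular. This gives a unique pair $(g^\star,\nu^\star)$ and makes $G_1,G_2$ in \eqref{eq:G}, and therefore $(P_1,P_2)=(\Yf G_1,\Yf G_2)$, well defined.

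\textbf{Step 3 (consistency with the collapsed form).} To confirm that the covariance form of Proposition~\ref{prop:collapse} is also well posed, I would apply the push-through/Woodbury identity
\[
\mathbf{H}^{-1}=\lambda^{-1}\bigl(I_M-\Zp^{\!\top}\Wp\Zp\bigr).
\]
Substituting it gives $S=\lambda^{-1}(\Suu-\Sup\Wp\Sup^{\!\top})=\tilde S\succ 0$, so $S^{-1}$ in \eqref{eq:P1}--\eqref{eq:P2} exists. Moreover $\Wp=(\Spp+\lambda I)^{-1}$ exists for every $\lambda>0$. Nothing in Steps 1--3 requires $\Zp$ or $\Phib$ to have full row rank, so the argument covers the rank-deficient regime $n_uN\le M<T_h$.

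\textbf{Step 4 (contrast with SPC).} Finally, I would note that \eqref{eq:spc_theta} requires $\Phib\Phib^{\!\top}\in\R^{T_h\times T_h}$ to be invertible, i.e.\ $\mathrm{rank}(\Phib)=T_h$. Since $\mathrm{rank}(\Phib)\le\min(T_h,M)$, this needs $M\ge T_h$. When $M<T_h$, $\Phib\Phib^{\!\top}$ is singular and the SPC formula is undefined, which is consistent with the closing remark in the proof of Theorem~\ref{thm:equiv}.
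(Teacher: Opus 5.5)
Your proposal is correct and follows essentially the paper's own argument---$\mathbf{H}\succeq\lambda I_M\succ 0$ for every $\lambda>0$, together with invertibility of the Schur complement $\tilde S=\Uf\mathbf{H}^{-1}\Uf^{\!\top}$ when $\Uf$ has full row rank. It also usefully makes explicit that $M\ge n_uN$ is only necessary for that rank condition, whereas the paper's parenthetical treats the two as equivalent. One minor caution on your side remark: persistency of excitation of order $N$ of the whole input sequence does not by itself give $\mathrm{rank}(\Uf)=n_uN$, because $\Uf$ is the depth-$N$ Hankel matrix of the tail $u(T_\mathrm{ini}),\dots,u(T-1)$ only; persistency of excitation of order $L$, which is implied by the hypothesis of Lemma~\ref{lem:willems}, does suffice.
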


By definition, the PRPC design matrix $\mathbf{H}=\Zp^{\!\top}\Zp+\lambda I_M$ is strictly positive definite for any $\lambda>0$. The Schur complement remains invertible if $\Uf$ maintains full row rank ($M \ge n_uN$). Consequently, in adaptive sliding-windows where $M$ falls below $T_h$, PRPC prevents the mathematical collapse inherent to SPC. To optimally balance the bias-variance trade-off, $\lambda$ is tuned via $k$-fold cross-validation on open-loop errors. By leveraging the analytical covariance collapse (Proposition~\ref{prop:collapse}), evaluating $\lambda$ avoids large inversions and requires only $\mathcal{O}(T_z^3)$ operations, rendering real-time tuning computationally trivial.

% ==============================================================
\section{Adaptive PRPC for Time-Varying Systems}
\label{sec:adaptive}
% ==============================================================

The analytical covariance collapse (Proposition~\ref{prop:collapse}) provides a natural foundation for adaptive control. Unlike direct data-driven methods that necessitate the online refactorization of continuously growing Hankel matrices, the PRPC formulation parametrizes the predictor exclusively through fixed-dimension covariances. This enables real-time adaptation with a constant $\mathcal{O}(T_z^3)$ computational footprint.

Let $\Sigma_\bullet^{\mathrm{on}}(k)$ denote the online sample covariances, recursively updated from the closed-loop trajectory. We define the instantaneous data vector $\zeta(k)=[\mathbf{z}_p(k)^\top, \mathbf{u}_f(k)^\top, \mathbf{y}_f(k)^\top]^\top$, corresponding to the most recent data window of length $T_\mathrm{ini}+N$. Using an exponential forgetting factor $\rho_f\in(0,1)$, the recursive rank-one update is governed by:
\begin{equation}
  \Sigma_\bullet^{\mathrm{on}}(k)=\rho_f\,\Sigma_\bullet^{\mathrm{on}}(k-1)
  +(1-\rho_f)\,M\,\zeta_a(k)\zeta_b(k)^{\!\top},
  \label{eq:recur}
\end{equation}
where $(\zeta_a,\zeta_b)$ extracts the relevant sub-vectors for each block defined in \eqref{eq:covs}, and the scalar $M$ normalizes the update to the scale of the offline data.

Pure exponential forgetting, however, is fundamentally ill-posed in closed-loop operation. As the system tracks a constant reference or converges to an equilibrium, the feedback signals lose persistency of excitation (PE). Consequently, $\lambda_{\min}(\Sigma_{pp}^{\mathrm{on}})$ decays toward zero, rendering the inversion of $\Wp$ numerically unstable.

To preserve strict regularity during periods of poor online excitation, we anchor the recursive covariance to the persistently exciting offline data via a convex combination:
\begin{equation}
  \Sigma_\bullet^{\mathrm{act}}(k)=\gamma\,\Sigma_\bullet^{\mathrm{off}}
  +(1-\gamma)\,\Sigma_\bullet^{\mathrm{on}}(k),\qquad \gamma\in(0,1).
  \label{eq:blend}
\end{equation}
The active covariances $\Sigma_\bullet^{\mathrm{act}}(k)$ parameterize the PRPC predictor via Proposition~\ref{prop:collapse} at each time step. This anchoring mechanism imposes a uniform spectral lower bound, a property essential for the robust stability guarantees derived in Section~\ref{sec:robust}.

\begin{lemma}[Uniform Well-Posedness]\label{lem:wellposed}
Suppose the offline data is persistently exciting ($\Sigma_{pp}^{\mathrm{off}} \succ 0$) and $\Uf^{\mathrm{off}}$ has full row rank. Then, for all $k \ge 0$, the active covariance satisfies $\Sigma_{pp}^{\mathrm{act}}(k) \succeq \gamma \Sigma_{pp}^{\mathrm{off}} \succ 0$. Consequently, the matrices $\Wp(k)$ and $S(k)$ are strictly positive definite and uniformly bounded.
\end{lemma}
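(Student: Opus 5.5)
The plan is to read off every claim from the blend \eqref{eq:blend} together with the positive semidefiniteness of the online Gram matrices generated by \eqref{eq:recur}.

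\emph{Step 1: the lower bound on $\Spp^{\mathrm{act}}(k)$.} Unrolling \eqref{eq:recur} gives
\[
\Sigma_{pp}^{\mathrm{on}}(k)=\rho_f^{k}\Sigma_{pp}^{\mathrm{on}}(0)+(1-\rho_f)M\sum_{j=1}^{k}\rho_f^{k-j}\mathbf{z}_p(j)\mathbf{z}_p(j)^\top .
\]
This is a nonnegative combination of positive semidefinite rank-one terms plus the initialization. If the initialization is PSD (for instance $\Sigma^{\mathrm{on}}(0)=0$ or $\Sigma^{\mathrm{off}}$), then $\Sigma_{pp}^{\mathrm{on}}(k)\succeq 0$ for all $k$. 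Since $1-\gamma>0$, this yields $\Sigma_{pp}^{\mathrm{act}}(k)\succeq\gamma\Sigma_{pp}^{\mathrm{off}}$. Because $\Sigma_{pp}^{\mathrm{off}}\succ0$ by assumption, the bound is strict, with $\lambda_{\min}(\Sigma_{pp}^{\mathrm{act}})\ge\gamma\lambda_{\min}(\Sigma_{pp}^{\mathrm{off}})=:\underline{\sigma}>0$.

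\emph{Step 2: $\Wp(k)$ is positive definite and bounded.} From Step 1,
\[
\Wp(k)=(\Sigma_{pp}^{\mathrm{act}}+\lambda I)^{-1}\preceq(\underline\sigma+\lambda)^{-1}I .
\]
For the bound on the other side, I will bound the data. The signals are confined to the constraint sets $\mathbb U,\mathbb Y$, or else the theorem must be read conditionally on bounded data. The convex combination with factor $(1-\rho_f)\sum_j\rho_f^{k-j}\le1$ then gives $\|\Sigma^{\mathrm{on}}_\bullet(k)\|\le M\bar\zeta^2$. Hence $\Wp(k)\succeq(\|\Sigma^{\mathrm{off}}_{pp}\|+M\bar\zeta^2+\lambda)^{-1}I$, uniformly in $k$.

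\emph{Step 3: the Schur complement $S(k)$.} This is the main obstacle, because $S$ is formed from blended cross-covariances and not from a single Gram matrix. I would view the blended blocks as one joint Gram matrix
\[
\Sigma_\Phi^{\mathrm{act}}=\begin{bmatrix}\Spp&\Sup^\top\\ \Sup&\Suu\end{bmatrix}^{\mathrm{act}} .
\]
It equals $\gamma\Phib^{\mathrm{off}}\Phib^{\mathrm{off}\top}+(1-\gamma)\sum_j c_j\phi_j\phi_j^\top$ with $c_j\ge0$, which is the Gram matrix of an augmented data matrix $\tilde\Phib=[\sqrt\gamma\,\Phib^{\mathrm{off}},\ \sqrt{(1-\gamma)c_j}\,\phi_j]$. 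Then $\lambda S$ is the Schur complement of $\Spp+\lambda I$ in $\Sigma_\Phi^{\mathrm{act}}+\mathrm{diag}(\lambda I,0)$. Equivalently, as in Appendix~\ref{app:collapse} and Proposition~\ref{prop:structural}, $S=\tilde\Uf\tilde{\mathbf H}^{-1}\tilde\Uf^\top$ with $\tilde{\mathbf H}=\tilde\Zp^\top\tilde\Zp+\lambda I$. The augmented $\tilde\Uf$ contains $\sqrt\gamma\,\Uf^{\mathrm{off}}$ as a column block, so it has full row rank.

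For a quantitative bound I would use the variational identity
\[
\lambda\, x^\top S x=\min_{y}\;\big\|\tilde\Phib^\top[y;x]\big\|^2+\lambda\|y\|^2 .
\]
Dropping the nonnegative online contribution gives $\lambda x^\top S x\ge\gamma\min_y\big(\|\Phib^{\mathrm{off}\top}[y;x]\|^2+\tfrac{\lambda}{\gamma}\|y\|^2\big)$. The right-hand side is a $k$-independent positive definite form, because $\Uf^{\mathrm{off}}$ has full row rank: at $x\neq0$, either $y=0$ and $\|\Uf^{\mathrm{off}\top}x\|^2>0$, or $\|y\|^2>0$. This yields a uniform $\underline s>0$ with $S(k)\succeq\underline s I$.

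\emph{Step 4: upper bound on $S(k)$.} Dropping the Schur-complement subtraction gives $S\preceq\lambda^{-1}\Suu^{\mathrm{act}}$, which is bounded by Step 2's data bound. Consequently $S(k)^{-1}$ is also uniformly bounded, and so are $P_1,P_2$ through \eqref{eq:P1}--\eqref{eq:P2}.
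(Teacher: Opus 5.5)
Your proposal is correct. Steps 1 and 2 match the paper's argument: the online Gram matrix is a nonnegative combination of PSD dyads, so the blend gives $\Sigma_{pp}^{\mathrm{act}}(k)\succeq\gamma\Sigma_{pp}^{\mathrm{off}}$, and hence $\Wp(k)\preceq(\gamma\lambda_{\min}(\Sigma_{pp}^{\mathrm{off}})+\lambda)^{-1}I$.

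For $S(k)$ you take a genuinely different and more complete route. The paper only notes that the offline Schur complement is positive definite because $\Uf^{\mathrm{off}}$ has full row rank, and then asserts that ``the convex blend inherits this property.'' Since $S=\lambda^{-1}(\Suu-\Sup\Wp\Sup^{\top})$ is nonlinear in the blended blocks, that inheritance actually needs an argument. One option is joint concavity of $(A,B,C)\mapsto C-BA^{-1}B^{\top}$, applied after splitting $\lambda I=\gamma\lambda I+(1-\gamma)\lambda I$; this yields $S(k)\succeq\gamma S^{\mathrm{off}}$.

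Your variational identity $\lambda x^{\top}Sx=\min_y\{\|\tilde{\Phib}^{\top}[y;x]\|^2+\lambda\|y\|^2\}$ proves exactly that concavity, since it writes the Schur complement as a pointwise minimum of linear functions of the joint Gram matrix. Because you keep the full $\lambda\|y\|^2$ term rather than splitting it, you even get a slightly stronger, explicit, $k$-independent floor: the offline Schur complement with regularization $\lambda/\gamma$ in place of $\lambda$, which dominates by monotonicity in the top-left block.

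Your route also makes explicit two hypotheses the paper uses silently. First, the recursion must be initialized with a jointly PSD block matrix (e.g.\ $\Sigma_\bullet^{\mathrm{on}}(0)=\Sigma_\bullet^{\mathrm{off}}$), so that the full $[\Zp;\Uf]$ Gram matrix stays PSD and not just its $pp$ block. Second, the upper bound on $S(k)$ and the uniform lower bound on $\Wp(k)$ require bounded closed-loop data, which the paper's phrase ``uniformly bounded'' takes for granted.
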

\begin{proof}
By definition \eqref{eq:recur}, the recursive matrix $\Sigma_{pp}^{\mathrm{on}}(k)$ is formed by a convex combination of positive semi-definite dyads $\zeta_p(\tau)\zeta_p(\tau)^\top$. Thus, $\Sigma_{pp}^{\mathrm{on}}(k) \succeq 0$ for all $k$. Substituting this into \eqref{eq:blend} yields $\Sigma_{pp}^{\mathrm{act}}(k) \succeq \gamma\Sigma_{pp}^{\mathrm{off}} \succ 0$.
It follows that the spectrum of $\Wp(k) = (\Sigma_{pp}^{\mathrm{act}}(k)+\lambda I)^{-1}$ is strictly bounded:
\begin{equation*}
    0 \prec \Wp(k) \preceq (\gamma\lambda_{\min}(\Sigma_{pp}^{\mathrm{off}}) + \lambda)^{-1} I.
\end{equation*}
Because $\Uf^{\mathrm{off}}$ has full row rank, the offline Schur complement is strictly positive definite. The convex blend inherits this property, ensuring $S(k)$ remains uniformly bounded and invertible irrespective of the transient loss of online excitation.
\end{proof}

\begin{comment}
    \begin{algorithm}[H]
\caption{Adaptive (Robust) PRPC Scheme}
\label{alg:adaptive}
\begin{algorithmic}[1]
\State \textbf{Offline:} Collect PE data, compute $\Sigma_\bullet^{\mathrm{off}}$ via \eqref{eq:covs}, and initialize $\Sigma_\bullet^{\mathrm{on}}\leftarrow\Sigma_\bullet^{\mathrm{off}}$.
\For{$k=0,1,2,\dots$}
  \State $\Sigma_\bullet^{\mathrm{act}} \leftarrow \gamma\Sigma_\bullet^{\mathrm{off}}+(1-\gamma)\Sigma_\bullet^{\mathrm{on}}$ \hfill \Comment{Anchor covariances \eqref{eq:blend}}
  \State Construct predictor matrices $(P_1,P_2)$ via Proposition~\ref{prop:collapse} \hfill \Comment{$\mathcal{O}(T_z^3)$ update}
  \State \textit{(Robust)} Compute uncertainty radius $r_k$ and tighten constraints $\mathcal{Y}\rightarrow\mathcal{Y}_{\mathrm{tight}}(k)$
  \State Solve the PRPC QP for $\uN$ and apply the control action $u(k)=\uN[0]$
  \State Measure $y(k+1)$ and append to the active data window $\zeta(k+1)$
  \State Update $\Sigma_\bullet^{\mathrm{on}}$ via the rank-one recursion \eqref{eq:recur}
\EndFor
\end{algorithmic}
\end{algorithm}
\end{comment}

% ==============================================================
\section{Robust Stability and Recursive Feasibility}
\label{sec:robust}
% ==============================================================

This section establishes finite-sample safety and stability guarantees for the adaptive PRPC framework. The primary challenge lies in bounding the prediction error, which is compounded by two factors: (i) the real-time recursive update inherently correlates the prediction regressors with past noise sequences, violating classical concentration assumptions; and (ii) the true underlying LTV system parameterization is time-varying.

\subsection{Finite-Sample Prediction Bounds}
\label{sec:effnoise}

Let $\Thetatrue(k)$ denote the local multi-step mapping of the true LTV system at time~$k$. The realized future trajectory is given by $\yN(k) = \Thetatrue(k)\vphi(k) + \mathbf{w}_N(k)$, where $\mathbf{w}_N(k)$ represents the accumulated future noise sequence. We decompose the true mapping as $\Thetatrue(k) = \Thetabar + \Delta\Theta(k)$, where $\Thetabar$ is the nominal predictor associated with the offline anchor data, and $\Delta\Theta(k)$ encapsulates the parametric mismatch. 

To formalize the statistical concentration bounds, we impose the following standard conditions on the noise and the mismatch.

\begin{assumption}
\label{as:noise}
There exists a known constant $L_\Theta \ge 0$ such that the parametric mismatch is uniformly bounded: $\|\Delta\Theta(k)\|_{\mathrm{op}} \le L_\Theta$ for all $k \ge 0$. Furthermore, the noise sequence $\mathbf{w}_N(k)$ is a martingale difference sequence adapted to the filtration $\mathcal{D}_k$, and is conditionally sub-Gaussian with variance proxy $c_w^2 > 0$, satisfying $\E[\exp(\nu^\top\mathbf{w}_N(k)) \mid \mathcal{D}_k] \le \exp(\frac{1}{2} c_w^2\|\nu\|_2^2)$ for all $\nu \in \mathbb{R}^{n_y N}$.
\end{assumption}

Let $\Phi(k)$ denote the empirical regressor Gram matrix at time $k$, constructed from the active covariances $\Sigma_{pp}^{\mathrm{act}}, \Sigma_{up}^{\mathrm{act}}$, and $\Sigma_{uu}^{\mathrm{act}}$. We define the regularized design matrix as $\mathbf{V}(k) = \Phi(k) + \lambda I_{n_\vphi}$, where $n_\vphi$ is the dimension of the regressor $\vphi(k)$.

\begin{theorem}[Uniform-in-Time Prediction Bound]\label{thm:snm}
Suppose Assumption~\ref{as:noise} holds. Let $\delta \in (0,1)$ be a user-defined failure probability. Then, with probability at least $1-\delta$, the realized prediction error is uniformly bounded for all $k \ge 0$ by:
\begin{align}
  \|\yN(k)-\yhatN(k)\|_2 &\le \beta_k\|\vphi(k)\|_{\mathbf{V}(k)^{-1}} + L_\Theta\|\vphi(k)\|_2 \nonumber \\&=: r_k(\vphi(k)),
  \label{eq:radius}
\end{align}
where the time-varying statistical radius parameter is defined as
\begin{align*}
\beta_k = &c_w\sqrt{1+\lambda\varrho(\mathbf{V}(k)^{-1})} \\& \times \sqrt{n_yN\log\det(I + \lambda^{-1}\Phi(k)) + 2\log(1/\delta)},
\end{align*}
and $\varrho(\cdot)$ denotes the spectral radius.
\end{theorem}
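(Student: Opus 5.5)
The plan is to split the prediction error into a nominal estimation part and a mismatch part, and to handle the estimation part with the multi-output self-normalized martingale (SNM) bound of Abbasi-Yadkori type. The predictor at time $k$ is the collapsed ridge estimate of Proposition~\ref{prop:collapse} evaluated on the active covariances, $\yhatN(k)=\Thetahat(k)\vphi(k)$ with $\Thetahat(k)=\Sigma_{y\vphi}^{\mathrm{act}}(k)\mathbf{V}(k)^{-1}$ (reading $P_1,P_2$ as blocks of this ridge solution). Write the stacked target data as $\mathbf{Y}=\Thetabar\,\boldsymbol{\Phi}+\mathbf{E}$, where the columns of $\mathbf{E}$ are the (mismatch plus noise) residuals. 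Then
\[
\yN(k)-\yhatN(k)=\Delta\Theta(k)\vphi(k)+\big(\Thetabar-\Thetahat(k)\big)\vphi(k)+\mathbf{w}_N(k),
\]
and
\[
\Thetabar-\Thetahat(k)=\lambda\Thetabar\mathbf{V}(k)^{-1}-\mathbf{S}(k)\mathbf{V}(k)^{-1},
\]
with $\mathbf{S}(k)=\sum_\tau \mathbf{e}_\tau\vphi_\tau^\top$ the (forgetting- and blending-weighted) noise--regressor cross sum. The mismatch term is bounded by $L_\Theta\|\vphi(k)\|_2$ directly from Assumption~\ref{as:noise}. This yields the second summand of $r_k$.

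For the stochastic term I would apply Cauchy--Schwarz in the $\mathbf{V}(k)$-geometry:
\[
\|\mathbf{S}(k)\mathbf{V}(k)^{-1}\vphi(k)\|_2\le\|\mathbf{S}(k)\mathbf{V}(k)^{-1/2}\|_{F}\,\|\vphi(k)\|_{\mathbf{V}(k)^{-1}}.
\]
Each of the $n_yN$ output rows $s_i^\top$ of $\mathbf{S}(k)$ is a vector-valued martingale adapted to $\mathcal{D}_k$, with conditionally $c_w$-sub-Gaussian increments by Assumption~\ref{as:noise}. The scalar self-normalized inequality then gives, uniformly in $k$ via the stopping-time/supermartingale argument,
\[
\|s_i\|^2_{\mathbf{V}(k)^{-1}}\le 2c_w^2\log\!\big(\det(\mathbf{V}(k))^{1/2}\det(\lambda I)^{-1/2}/\delta_i\big).
\]
Summing over the rows, with $\delta_i=\delta/(n_yN)$ or better a direct vector-valued version that avoids the union-bound log factor, produces the factor $c_w\sqrt{n_yN\log\det(I+\lambda^{-1}\Phi(k))+2\log(1/\delta)}$. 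The bias term $\lambda\Thetabar\mathbf{V}^{-1}\vphi$ and the fresh future noise $\mathbf{w}_N(k)$ then need to be absorbed into the prefactor $\sqrt{1+\lambda\varrho(\mathbf{V}(k)^{-1})}$. I would do this by writing
\[
\lambda\|\mathbf{V}^{-1/2}\|\le\sqrt{\lambda}\sqrt{\lambda\varrho(\mathbf{V}^{-1})},
\]
so that the regularization bias and the noise concentration combine through $\sqrt{a}+\sqrt{b}\le\sqrt{2(a+b)}$-type inequalities into the stated product form. A normalization of $\|\Thetabar\|$ into $c_w$ is presumably implicit in the paper.

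The main obstacle is that the active covariance is not a plain sum of dyads. It is a convex blend of a fixed offline anchor with an exponentially forgotten online sum scaled by $M(1-\rho_f)$, so $\mathbf{S}(k)$ is a weighted martingale whose weights $\rho_f^{k-\tau}$ change with $k$. The uniform-in-time SNM argument requires a single nested filtration with predictable weights. I would handle this by noting that for each fixed $k$ the weights are deterministic, and by dominating the weighted Gram via $\Phi(k)$ itself, so that the determinant ratio in the SNM bound is expressed exactly through $\log\det(I+\lambda^{-1}\Phi(k))$. Uniformity over $k$ would then follow from the method-of-mixtures supermartingale and Ville's maximal inequality applied to the unweighted process, with the weighting cost entering only through the $\varrho(\mathbf{V}(k)^{-1})$ correction. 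Failing that, I would fall back to a union bound over $k$ with $\delta_k=6\delta/(\pi^2k^2)$, which adds only logarithmic terms.
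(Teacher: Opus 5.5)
\textbf{The step that fails: absorbing $\mathbf w_N(k)$.} Your decomposition, the Cauchy--Schwarz step in the $\mathbf V(k)$-geometry, the vector SNM bound and the $L_\Theta$ term are the same route as the paper's appendix. The plan breaks where you propose to absorb the fresh future noise $\mathbf w_N(k)$ into the prefactor $\sqrt{1+\lambda\varrho(\mathbf V(k)^{-1})}$. Since $\mathbf V(k)\succeq\lambda I$, that prefactor is at most $\sqrt2$. So $\beta_k\|\vphi(k)\|_{\mathbf V(k)^{-1}}$ is a $\mathcal D_k$-measurable quantity that vanishes with $\|\vphi(k)\|_{\mathbf V(k)^{-1}}$. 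By contrast, $\mathbf w_N(k)$ is an additive, conditionally mean-zero fluctuation of order $c_w\sqrt{n_yN}$ that does not depend on $\vphi(k)$ at all. No multiplicative correction can dominate it. Under the paper's own Gaussian noise (take $L_\Theta=0$ to see it cleanly), suppose $r_k$ stays bounded, as Lemma~\ref{lem:rbound} asserts. Then the inequality fails at every $k$ with conditional probability bounded away from zero, hence infinitely often almost surely. The event in the theorem therefore has probability zero, not at least $1-\delta$. The paper's proof does not overcome this either: it lists $\mathbf w_N(k)$ in the decomposition and then silently omits it in the final triangle inequality. What both arguments can actually control is $\|\Thetatrue(k)\vphi(k)-\yhatN(k)\|_2$, the error in predicting the conditional mean. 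To bound the realized output you need a separate additive noise radius. That radius is either a per-step guarantee or, via a union bound over $k$, grows like $\sqrt{\log k}$; either way it changes $r_k$.

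\textbf{Further steps that need justification.} The paper's appendix only asserts each of these.
\begin{enumerate}
\item Ridge bias. From $\mathbf V(k)\succeq\lambda I$ you get $\|\lambda\Thetabar\mathbf V^{-1}\vphi\|_2\le\sqrt\lambda\,\|\Thetabar\|\sqrt{\lambda\varrho(\mathbf V^{-1})}\,\|\vphi\|_{\mathbf V^{-1}}$. Cauchy--Schwarz in $\R^2$ then combines this with the martingale term into the coefficient $\sqrt{1+\lambda\varrho(\mathbf V^{-1})}\,\big(c_w^2X_k+\lambda\|\Thetabar\|^2\big)^{1/2}$, where $X_k=n_yN\log\det(I+\lambda^{-1}\Phi(k))+2\log(1/\delta)$. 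The $\lambda\|\Thetabar\|^2$ contribution does not disappear. Reaching the stated $\beta_k$ therefore requires an explicit hypothesis on $\lambda\|\Thetabar\|^2$, which is not contained in Assumption~\ref{as:noise}.
\item Mismatch drift. If the columns of $\mathbf E$ contain the past mismatch residuals $\Delta\Theta(\tau)\vphi(\tau)$, the rows of $\mathbf S(k)$ are not martingales. The predictable drift $\sum_\tau c_\tau\Delta\Theta(\tau)\vphi(\tau)\vphi(\tau)^\top\mathbf V^{-1}\vphi(k)$, with $c_\tau$ the blending/forgetting weights, must be bounded separately. One such bound is $L_\Theta\sqrt{\mathrm{tr}\,\Phi(k)}\,\|\vphi(k)\|_{\mathbf V(k)^{-1}}$, and no such term appears in $r_k$.
\item Weighted variance. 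The conditional variance of the noise part is $c_w^2\sum_\tau c_\tau^2\vphi(\tau)\vphi(\tau)^\top$. This is dominated by $\Phi(k)$ only if $c_\tau\le1$. The newest online sample carries $c_\tau=(1-\gamma)(1-\rho_f)M$, which equals $2.475$ in the LTV example. Domination therefore costs a factor $\sqrt{\max_\tau c_\tau}$ in $\beta_k$.
\item Uniformity in $k$. The weights change with $k$, so Ville's inequality for the unweighted process does not give a bound uniform in time. Your union-bound fallback adds $\log k$ terms that are absent from $\beta_k$, so it proves a different statement.
\end{enumerate}
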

\begin{proof}
See Appendix~\ref{app:snm}.
\end{proof}

The uncertainty radius $r_k(\vphi(k))$ elegantly decouples the error into a statistical estimation term (governed by the martingale concentration $\beta_k$) and a deterministic mismatch term ($L_\Theta$). 

\begin{lemma}[Bounded Uncertainty Radius]\label{lem:rbound}
Under the convex blending strategy (Lemma~\ref{lem:wellposed}) and Assumption~\ref{as:noise}, the design matrix is uniformly bounded from below: $\mathbf{V}(k) \succeq \gamma\Sigma^{\mathrm{off}}_{\vphi} + \lambda I \succ 0$, where $\Sigma^{\mathrm{off}}_\vphi$ is the offline regressor Gram matrix. Consequently, over any compact constraint set, the radius $r_k$ is uniformly bounded above by a finite scalar $r_\infty := \sup_{k} r_k(\vphi(k)) < \infty$.
\end{lemma}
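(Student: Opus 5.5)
The plan is to split the argument into a spectral lower bound on $\mathbf{V}(k)$ and a uniform upper bound on each of the three factors in $r_k$.

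\emph{Lower bound.} First, $\Phi(k)$ is the Gram matrix of the stacked regressor $\vphi=\mathrm{col}(\mathbf{z}_p,\mathbf{u}_f)$ built from the active covariances:
\begin{equation*}
\Phi(k)=\begin{bmatrix}\Sigma_{pp}^{\mathrm{act}}(k) & \Sigma_{up}^{\mathrm{act}}(k)^{\!\top}\\ \Sigma_{up}^{\mathrm{act}}(k) & \Sigma_{uu}^{\mathrm{act}}(k)\end{bmatrix}.
\end{equation*}
Blending \eqref{eq:blend} acts blockwise, so $\Phi(k)=\gamma\Sigma^{\mathrm{off}}_\vphi+(1-\gamma)\Phi^{\mathrm{on}}(k)$. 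Exactly as in the proof of Lemma~\ref{lem:wellposed}, the recursion \eqref{eq:recur} writes $\Phi^{\mathrm{on}}(k)$ as a nonnegative combination of dyads $\vphi(\tau)\vphi(\tau)^\top$, together with the PSD initialization. Hence $\Phi^{\mathrm{on}}(k)\succeq 0$. This gives $\mathbf{V}(k)\succeq\gamma\Sigma^{\mathrm{off}}_\vphi+\lambda I\succeq\lambda I\succ 0$, and the strict positivity holds for $\lambda>0$ alone. It follows that $\|\vphi\|_{\mathbf{V}(k)^{-1}}\le\lambda_{\min}(\gamma\Sigma^{\mathrm{off}}_\vphi+\lambda I)^{-1/2}\|\vphi\|_2$ and that $\varrho(\mathbf{V}(k)^{-1})\le\lambda^{-1}$. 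The first factor of $\beta_k$ is therefore at most $c_w\sqrt2$.

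\emph{Upper bound on the log-determinant (main obstacle).} The difficulty is that $\log\det(I+\lambda^{-1}\Phi(k))$ must be bounded uniformly in $k$. I would bound $\Phi(k)$ from above by noting that the forgetting recursion is a contraction. The data vectors come from the compact constraint set, with outputs in $\mathbb{Y}$, inputs in $\mathbb{U}$, and $\mathbf{z}_p$ assembled from past constrained signals, so $\|\vphi(\tau)\|_2\le\bar\vphi$ for all $\tau$. Then $\|\Phi^{\mathrm{on}}(k)\|\le\rho_f^k\|\Phi^{\mathrm{on}}(0)\|+(1-\rho_f)M\bar\vphi^2\sum_j\rho_f^j\le\max\{\|\Phi^{\mathrm{on}}(0)\|,M\bar\vphi^2\}$. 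Consequently $\|\Phi(k)\|\le\bar\sigma:=\gamma\|\Sigma^{\mathrm{off}}_\vphi\|+(1-\gamma)\max\{\|\Phi^{\mathrm{on}}(0)\|,M\bar\vphi^2\}$, and so $\log\det(I+\lambda^{-1}\Phi(k))\le n_\vphi\log(1+\bar\sigma/\lambda)$.

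One subtlety needs care. In the usual self-normalized setting the Gram matrix grows without bound. Here, the exponential forgetting together with the factor $M$ normalization is exactly what keeps the log-determinant bounded, and this should be stated explicitly.

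\emph{Combining.} With these bounds, $\beta_k\le\bar\beta:=c_w\sqrt2\,\sqrt{n_yN n_\vphi\log(1+\bar\sigma/\lambda)+2\log(1/\delta)}$. Therefore
\begin{equation*}
r_k(\vphi(k))\le\Big(\bar\beta\,\lambda_{\min}(\gamma\Sigma^{\mathrm{off}}_\vphi+\lambda I)^{-1/2}+L_\Theta\Big)\bar\vphi=:r_\infty<\infty,
\end{equation*}
uniformly in $k$. Since $r_\infty$ is deterministic, it holds on every sample path and not only on the $1-\delta$ event of Theorem~\ref{thm:snm}.
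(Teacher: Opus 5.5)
Your argument is correct. It follows the route the statement itself indicates. The paper gives no separate proof of this lemma, only the appeal to the blending floor of Lemma~\ref{lem:wellposed} and to compactness. Relative to that sketch, you supply three pieces that are actually needed:
\begin{itemize}
\item You show the \emph{full} assembled online Gram $\Phi^{\mathrm{on}}(k)$ is PSD; Lemma~\ref{lem:wellposed} only treats the $pp$ block.
\item You bound $\log\det(I+\lambda^{-1}\Phi(k))$ uniformly using the geometric weights of the forgetting recursion. The paper's compactness remark silently relies on this, and without it $\beta_k$ could grow as in standard self-normalized bounds.
\item You produce a deterministic a priori constant rather than the path-dependent $\sup_k r_k$. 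That is what Assumption~\ref{as:terminal} needs in order to size $\mathcal{Y}_f$ in advance.
\end{itemize}
Your argument also shows the anchor is not needed for finiteness. The bound $\mathbf{V}(k)\succeq\lambda I$ already controls $\|\vphi\|_{\mathbf{V}(k)^{-1}}$ and $\varrho(\mathbf{V}(k)^{-1})$, and the term $\gamma\Sigma^{\mathrm{off}}_\vphi$ only sharpens the constant. One small correction: what keeps the log-determinant bounded is the weight $(1-\rho_f)$ with $\rho_f<1$, not the scale factor $M$.

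One caveat concerns your last sentence. Your bound holds on every path \emph{on which} $\|\vphi(\tau)\|_2\le\bar{\vphi}$ for all $\tau$. That bound is a hypothesis you inherit from the phrase ``over any compact constraint set.'' In closed loop, $\mathbf{z}_p$ contains measured outputs. Their confinement to $\mathcal{Y}$ is only guaranteed on the event $\mathcal{E}$, through recursive feasibility, and recursive feasibility itself invokes this lemma. So when the lemma is used in Theorem~\ref{thm:feas}, the regressor bound and $r_k\le r_\infty$ must be propagated jointly by induction on $\mathcal{E}$. They cannot be asserted unconditionally on every sample path.
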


\subsection{Robust Recursive Feasibility and Practical Stability}

To guarantee safety despite prediction errors, the adaptive controller enforces dynamically tightened constraints. At time $k$, the admissible output set is restricted to:
\begin{equation}
  \mathcal{Y}_{\mathrm{tight}}(k) = \mathcal{Y} \ominus \mathcal{B}_{r_k(\vphi(k))},
  \label{eq:tight}
\end{equation}
where $\ominus$ denotes the Pontryagin difference and $\mathcal{B}_r = \{e : \|e\|_2 \le r\}$ is the closed Euclidean ball. We invoke standard robust MPC terminal ingredients to ensure stability.

\begin{assumption}[Terminal Ingredients]\label{as:terminal}
There exists a robust control-invariant terminal set $\mathcal{Y}_f \subseteq \mathcal{Y} \ominus \mathcal{B}_{r_\infty}$ and a local stabilizing control law $\kappa_f : \mathcal{Y}_f \to \mathcal{U}$ such that, for all $y \in \mathcal{Y}_f$ and any disturbance realization $\|w\|_2 \le c_w$, the control input satisfies $\kappa_f(y) \in \mathcal{U}$ and the successor state remains within $\mathcal{Y}_f$.
\end{assumption}

\begin{theorem}[Probabilistic Recursive Feasibility]\label{thm:feas}
Suppose Assumptions~\ref{as:noise} and \ref{as:terminal} hold. If the adaptive PRPC problem is feasible at time $k=0$, then with probability at least $1-\delta$, the online optimization remains recursively feasible for all $k > 0$.
\end{theorem}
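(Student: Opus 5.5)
The argument is the standard shifted-candidate construction from robust MPC, run on the high-probability event supplied by Theorem~\ref{thm:snm}. Let $\mathcal{E}_\delta$ denote the event on which the bound \eqref{eq:radius} holds simultaneously for all $k \ge 0$. By Theorem~\ref{thm:snm}, $\mathbb{P}(\mathcal{E}_\delta) \ge 1-\delta$. The uniformity in time is essential here: it is what allows a single probability budget $\delta$, with no union bound over $k$. The rest of the argument is deterministic, conditional on $\mathcal{E}_\delta$. We then proceed by induction. Assume the problem is feasible at time $k$ with optimal input sequence $\uN^\star(k) = \mathrm{col}(u_0^\star,\dots,u_{N-1}^\star)$ and predicted outputs $\yhatN(k) \in \mathcal{Y}_{\mathrm{tight}}(k)$. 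We must exhibit a feasible candidate at time $k+1$.

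\textbf{Candidate.} After applying $u_0^\star$ and measuring $y(k+1)$, take the shifted sequence $\tilde{\mathbf{u}}_N(k+1) = \mathrm{col}(u_1^\star,\dots,u_{N-1}^\star,\kappa_f(\cdot))$, with the tail generated by the terminal law of Assumption~\ref{as:terminal}. Input admissibility of the first $N-1$ entries is inherited from time $k$. The last entry satisfies $\kappa_f(y)\in\mathcal{U}$ because the terminal set is reached.

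\textbf{Output constraints.} On $\mathcal{E}_\delta$, the realized trajectory satisfies $\|\yN(k)-\yhatN(k)\|_2 \le r_k$. Since $\yhatN(k)\in\mathcal{Y}\ominus\mathcal{B}_{r_k}$, the realized outputs lie in $\mathcal{Y}$. The new prediction $\hat{\mathbf{y}}_N(k+1)$, generated from the updated $\zini$ and the updated active covariances, must be shown to lie in $\mathcal{Y}\ominus\mathcal{B}_{r_{k+1}}$. The plan is to compare it with the realized continuation of the old plan, which lies within $r_k$ of the old prediction. A triangle inequality then gives the bound
\[
\|\hat{\mathbf{y}}_N(k+1) - \text{true continuation}\|_2 \le r_{k+1}.
\]
Lemma~\ref{lem:rbound} gives $r_k, r_{k+1} \le r_\infty$, so containment in $\mathcal{Y}\ominus\mathcal{B}_{r_\infty}$ is enough for the first $N-1$ steps, and the terminal step uses $\mathcal{Y}_f \subseteq \mathcal{Y}\ominus\mathcal{B}_{r_\infty}$. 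For that terminal step, robust invariance of $\mathcal{Y}_f$ under disturbances with $\|w\|_2 \le c_w$ keeps the appended output in $\mathcal{Y}_f$. Here the sub-Gaussian proxy $c_w$ is interpreted as the disturbance bound on $\mathcal{E}_\delta$.

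\textbf{Main obstacle.} The hardest step is the mismatch between the two radii. The tightening at $k+1$ uses $r_{k+1}$, which depends on the updated $\mathbf{V}(k+1)$ and $\vphi(k+1)$, while the candidate is certified only through $r_k$. In addition, the predictor matrices $(P_1,P_2)$ themselves change through \eqref{eq:recur}--\eqref{eq:blend}. I would first try to avoid nested tube arguments by using the uniform bound $r_\infty$ from Lemma~\ref{lem:rbound} together with the terminal margin. If this fails, I would strengthen the tightening to $\mathcal{Y}\ominus\mathcal{B}_{r_\infty}$ for the interior steps, so that the shifted candidate's predictions lie in the same fixed set at every time. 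Uniform well-posedness (Lemma~\ref{lem:wellposed}) ensures the predictor exists at every $k$. Induction on $k$ within $\mathcal{E}_\delta$ then yields feasibility for all $k>0$ with probability at least $1-\delta$.
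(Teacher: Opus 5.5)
Your skeleton is the same as the paper's. You condition on the uniform-in-time event of Theorem~\ref{thm:snm} and induct with the shifted candidate whose last input is $\kappa_f$. You handle the terminal element through $\mathcal{Y}_f\subseteq\mathcal{Y}\ominus\mathcal{B}_{r_\infty}\subseteq\mathcal{Y}\ominus\mathcal{B}_{r_{k+1}}$ via Lemma~\ref{lem:rbound}. For the first $N-1$ outputs, the paper simply asserts that the candidate's predictions are strict temporal shifts of $\yhatN^{\star}(k)$ and therefore inherit the tightened constraints. You are right to distrust that step, for three reasons. At $k+1$ the predictor acts on a $\zini$ refreshed with measured rather than predicted outputs. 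The matrices $(P_1,P_2)$ are recomputed from the updated covariances. And $r_{k+1}$ may exceed $r_k$. However, your replacement does not close the induction, so the output-constraint step remains a genuine gap.

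\textbf{Why the triangle inequality fails.} Let $\bar{\mathbf{y}}$ denote the shifted old prediction. Your chain (old prediction, then realized continuation at distance $\le r_k$, then new prediction at distance $\le r_{k+1}$) only gives $\|\hat{\mathbf{y}}_N(k+1)-\bar{\mathbf{y}}\|_2\le r_k+r_{k+1}$. Hence $\hat{\mathbf{y}}_N(k+1)\in(\mathcal{Y}\ominus\mathcal{B}_{r_k})\oplus\mathcal{B}_{r_k+r_{k+1}}\subseteq\mathcal{Y}\oplus\mathcal{B}_{r_{k+1}}$. This set is in general not even inside $\mathcal{Y}$, let alone inside $\mathcal{Y}\ominus\mathcal{B}_{r_{k+1}}$. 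The triangle inequality consumes margin; it cannot create it. Your displayed bound is just Theorem~\ref{thm:snm} applied at time $k+1$, which is not the containment you need.

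\textbf{Why the $r_\infty$ fallback fails.} Tightening all interior steps by $r_\infty$ fails for the same reason. Any nonzero discrepancy $d_k$ between the new predictions and the shifted old ones eats into a constant margin.

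\textbf{What is missing.} Closing the argument requires two ingredients that neither \eqref{eq:tight} nor Lemma~\ref{lem:rbound} supplies.
\begin{enumerate}
\item An explicit uniform bound $\bar d\ge d_k$. It must cover the innovation entering $\zini(k+1)$, propagated through $P_1(k+1)$. It must cover the drift $\Thetahat(k+1)-\Thetahat(k)$ induced by \eqref{eq:recur}--\eqref{eq:blend}. It must also cover the fact that a regularized multi-step least-squares predictor is not shift-consistent: even with $\zini$ rebuilt from predicted outputs and frozen $(P_1,P_2)$, the shifted predictions would not coincide.
\item A tightening indexed by the prediction step, with margins increasing along the horizon so that $\mathcal{Y}_{j+1}\oplus\mathcal{B}_{\bar d}\subseteq\mathcal{Y}_j$, as in constraint-tightening robust MPC. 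This must be paired with a terminal set that is robust to $\bar d$.
\end{enumerate}

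\textbf{The role of $c_w$.} Separately, reading $c_w$ as a pathwise disturbance bound on $\mathcal{E}_\delta$ is unjustified. The event $\mathcal{E}_\delta$ controls only prediction errors through $r_k$, and sub-Gaussian noise is unbounded. Invoking the robust invariance of Assumption~\ref{as:terminal}, which is stated for $\|w\|_2\le c_w$, therefore needs an additional high-probability event with its own share of $\delta$.
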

\begin{proof}
See Appendix~\ref{app:feas}.
\end{proof}

\begin{theorem}[Input-to-State Practical Stability]\label{thm:isps}
Under the conditions of Theorem~\ref{thm:feas}, the optimal value function $V_N^{\star}(k)$ of the adaptive PRPC satisfies the dissipation inequality:
\begin{equation}
  V_N^{\star}(k+1) - V_N^{\star}(k) \le -\alpha_1(\|y(k)-r_y\|) + \sigma\big(r_k(\vphi(k))\big),
  \label{eq:isps}
\end{equation}
with probability at least $1-\delta$, where $\alpha_1 \in \mathcal{K}_\infty$ and $\sigma \in \mathcal{K}$. Consequently, the closed-loop system is Input-to-State practically Stable (ISpS), converging to an ultimate invariant set whose size is proportional to $\mathcal{O}(r_\infty)$.
\end{theorem}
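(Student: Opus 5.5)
We follow the standard robust MPC route. The value function serves as an ISpS-Lyapunov function, and a shifted candidate sequence is compared with the optimal one. Throughout, we work on the event $\mathcal{E}_\delta$ of probability at least $1-\delta$ on which Theorem~\ref{thm:snm} holds uniformly in time. Theorem~\ref{thm:feas} then gives feasibility at every $k$, so $V_N^\star(k)$ is well defined along the closed loop.

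We take the stage cost $\ell(y,u)=\|y-r_y\|_Q^2+\|u-r_u\|_R^2$ with $Q,R\succ0$, together with a terminal cost $V_f$. We assume the standard terminal decrease $V_f(y^+)-V_f(y)\le-\ell(y,\kappa_f(y))$ on $\mathcal{Y}_f$, with $V_f$ uniformly continuous on $\mathcal{Y}_f$. This is implicit in Assumption~\ref{as:terminal}.

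\textbf{Step 1 (candidate).} Let $\uN^\star(k)$ be optimal at time $k$. At time $k+1$ we form the candidate by shifting this sequence and appending $\kappa_f$ at the tail. By the recursive-feasibility argument of Theorem~\ref{thm:feas}, the candidate is admissible for the tightened constraints \eqref{eq:tight}.

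\textbf{Step 2 (prediction mismatch).} The candidate's predicted outputs at time $k+1$ differ from the tail of the optimal prediction at $k$ for two reasons. First, $\zini$ is updated with the realized $y(k)$ rather than the predicted value. Second, the predictor $(P_1,P_2)$ changes through \eqref{eq:blend}. On $\mathcal{E}_\delta$, the first discrepancy has size at most $r_k(\vphi(k))$ by Theorem~\ref{thm:snm}. For the second, Lemma~\ref{lem:wellposed} and Proposition~\ref{prop:collapse} show that $P_1,P_2$ are uniformly bounded, since $\Wp$ and $S^{-1}$ are uniformly bounded. Hence the induced output change is at most $c\,r_k$ for a constant $c$. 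Here we treat the predictor drift as absorbed in the mismatch term $L_\Theta\|\vphi\|$; this is the main obstacle. If this is not justified, the fallback is to add an explicit bound on $\|P_i(k+1)-P_i(k)\|$ via the rank-one update \eqref{eq:recur} and fold it into $\sigma$. Since all quantities lie in compact sets, quadratic costs are Lipschitz there with a constant $L_\ell$. The cost of the candidate therefore exceeds the "nominal shifted cost" by at most $\sigma(r_k):=L_\ell N c\,r_k+L_{V_f}c\,r_k$, which is of class $\mathcal{K}$.

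\textbf{Step 3 (decrease and conclusion).} The nominal shifted cost equals $V_N^\star(k)-\ell(y(k),u(k))+[V_f(y^+)-V_f(y)+\ell(y,\kappa_f(y))]$. The bracket is nonpositive by the terminal decrease. Optimality at $k+1$ gives $V_N^\star(k+1)\le$ candidate cost. Using $\ell\ge\lambda_{\min}(Q)\|y-r_y\|^2=:\alpha_1(\|y-r_y\|)$ yields \eqref{eq:isps}.

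For the ISpS conclusion, we need $\mathcal{K}_\infty$ bounds $\alpha_l(\|y-r_y\|)\le V_N^\star\le\alpha_u(\|y-r_y\|)+c'$. The lower bound follows from the first stage cost. The upper bound follows from compactness and boundedness of the feasible set, with the offset $c'$ allowed in the practical setting. A standard comparison argument then gives an ultimate bound. Using Lemma~\ref{lem:rbound} to replace $r_k$ by $r_\infty$ shows that trajectories enter a sublevel set $\{V_N^\star\le\alpha_u\circ\alpha_1^{-1}(\sigma(r_\infty))+\sigma(r_\infty)\}$. Because $\sigma$ is linear in $r_\infty$, the size of this set is $\mathcal{O}(r_\infty)$. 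All statements hold on $\mathcal{E}_\delta$, that is, with probability at least $1-\delta$.
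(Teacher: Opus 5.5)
Your route is the paper's: the shifted candidate of Theorem~\ref{thm:feas}, the nonpositive terminal bracket, a Lipschitz bound of the cost perturbation by the prediction error, and $\sigma(s)=L_J s$ together with $r_k\le r_\infty$. The paper closes your Step~2 in one line, $\Delta_{\mathrm{pred}}\le L_J\|\yN(k)-\yhatN(k)\|_2\le L_J r_k$. It never separates the re-initialization of $\zini$ from the change of $(P_1,P_2)$ caused by \eqref{eq:blend}, so the obstacle you flag is real and is not resolved there either.

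Your absorption step is not licensed. $L_\Theta$ in Assumption~\ref{as:noise} bounds the true-system mismatch $\Thetatrue(k)-\Thetabar$, not the estimator increment $\Thetahat(k+1)-\Thetahat(k)$. Your fallback would add a term of order $\|\Thetahat(k+1)-\Thetahat(k)\|\,\|\tilde\vphi(k+1)\|$, where $\tilde\vphi(k+1)$ is the candidate regressor. That term is not a function of $r_k(\vphi(k))$, so \eqref{eq:isps} would lose its stated form.

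A cleaner repair is to compare both predictions with the true response. The bound of Theorem~\ref{thm:snm} holds for all $k$ on a single event, and its statistical part is a matrix-level estimate valid for any regressor. For shift-consistent true dynamics, the true responses to the old plan and to the shifted candidate coincide on the $N-1$ overlapping steps. Hence the candidate's prediction at $k+1$ lies within $r_k+r_{k+1}$ of the tail of $\yhatN^\star(k)$, however $\Thetahat$ has moved. Then $\sigma$ acts on $r_k+r_{k+1}\le 2r_\infty$, which is enough for the practical conclusion.

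The same discrepancy also affects your Step~1. The proof of Theorem~\ref{thm:feas} treats the candidate's first $N-1$ predictions as exact shifts, which your own Step~2 shows they are not. So citing that theorem does not actually establish admissibility of the candidate.

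Your ISpS step goes beyond the paper, which simply asserts the ultimate bound from $r_k\le r_\infty$. As written, however, it does not deliver $\mathcal{O}(r_\infty)$. With an offset $c'>0$ in the upper bound, the comparison argument gives a sublevel set whose level contains $c'$. You dropped $c'$, and if $c'$ is the compactness bound on $V_N^\star$, the conclusion is vacuous.

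You need $c'=0$, i.e.\ a genuine $\mathcal{K}_\infty$ upper bound. That bound cannot be phrased in $\|y(k)-r_y\|$ alone, because $V_N^\star$ depends on the whole window $\zini$ and on the time-varying active covariances. Correspondingly, the decrease must control the whole window. One way is to keep the input part of $\ell$ and sum the dissipation inequality over $T_\mathrm{ini}$ consecutive steps.

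Even then, $\mathcal{O}(r_\infty)$ describes the level of $V_N^\star$. Pulling it back through your quadratic $\alpha_l$ bounds $\|y-r_y\|$ only by $\mathcal{O}(\sqrt{r_\infty})$. This caveat applies equally to the paper's wording about the tracking error.
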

\begin{proof}
See Appendix~\ref{app:isps}.
\end{proof}

% ==============================================================
\section{Numerical Results}
\label{sec:numerics}
% ==============================================================

In this section, we validate the theoretical guarantees and real-time tractability of the proposed PRPC framework. The numerical evaluations are conducted across two distinct operational regimes: first, a static LTI benchmark to isolate the fundamental bias-variance trade-offs and benchmark closed-loop performance; and second, a multi-variable polytopic LTV system to explicitly evaluate the adaptive recursive mechanisms and the probabilistic finite-sample confidence bounds.

\subsection{LTI Benchmark: Boeing 747  Model}

To establish a baseline understanding of the proposed framework, we consider the discrete-time lateral-directional dynamics of a Boeing 747 aircraft. The system is defined by the state-space matrices in the Table \ref{tab:lti_matrices}.
\begin{table}[H]
    \centering
    \caption{System matrices for the LTI benchmark.}
    \label{tab:lti_matrices}
    % \resizebox forces the table to fit the column width exactly
    \resizebox{\columnwidth}{!}{%
        % \scriptsize reduces the font size to make it compact
        \scriptsize
        \setlength{\arraycolsep}{2.5pt} % Compresses horizontal space inside matrices
        \renewcommand{\arraystretch}{1.5} % Adds a bit of padding between rows
        
        \begin{tabular}{c}
            \toprule
            \textbf{System Matrices} \\
            \midrule
            
            $A = \begin{bmatrix} 
                0.9997 &  0.0038 & -0.0001 & -0.0322 \\ 
               -0.0056 &  0.9648 &  0.7446 &  0.0001 \\ 
                0.0020 & -0.0097 &  0.9543 & -0.0000 \\ 
                0.0001 & -0.0005 &  0.0978 &  1.0000 
            \end{bmatrix}, \,
            B = \begin{bmatrix} 
                0.0010 & 0.1000 \\ 
               -0.0615 & 0.0183 \\ 
               -0.1133 & 0.0586 \\ 
               -0.0057 & 0.0029 
            \end{bmatrix}, \,
            C = \begin{bmatrix} 
                1 &  0 & 0 & 0 \\ 
                0 & -1 & 0 & 7.74 
            \end{bmatrix}$ \\
            
            \bottomrule
        \end{tabular}%
    }
\end{table} The system operates with dimensions $n=4$, $n_u=2$, and $n_y=2$, utilizing prediction horizons of $T_\mathrm{ini}=N=20$, yielding regression dimensions $T_z=80$ and $T_h=120$. 

A fundamental theoretical claim of this work is the exact recovery of standard Subspace Predictive Control (SPC) in the unregularized limit. As illustrated in Fig.~\ref{fig:equivalence}, the analytical covariance-collapse formulation strictly reproduces SPC predictions to machine precision ($\sim 10^{-10}$) as $\lambda \to 0^+$. This empirically verifies Theorem~\ref{thm:equiv} and demonstrates the numerical stability of the covariance approach, especially when contrasted with the direct $\mathcal{O}(M^3)$ KKT solve, which suffers severe numerical degradation below $\lambda \approx 10^{-6}$.

\begin{figure}[t!]
\centering
\includegraphics[width=\columnwidth]{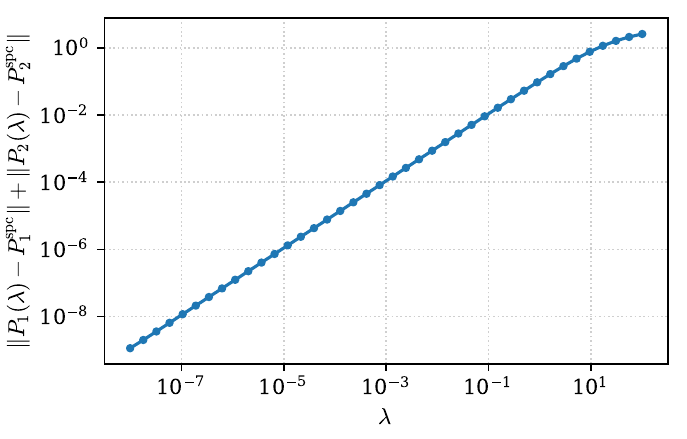}
\caption{PRPC $\to$ SPC as $\lambda\to0^{+}$. The covariance-collapse predictor matches SPC to machine precision, verifying Theorem~\ref{thm:equiv}.}
\label{fig:equivalence}
\end{figure}

Building upon this structural equivalence, we investigate the statistical bias-variance trade-off inherent to the predictor extraction. Fig.~\ref{fig:biasvariance} maps the predictor Mean Squared Error (MSE) against the Regularization parameter $\lambda$ for an abundant dataset ($M/T_h\approx20$). In environments subject exclusively to measurement noise, the MSE ratio remains flat at unity with the optimum at $\lambda^{\star} \to 0$, corroborating the asymptotic optimality of OLS derived in Proposition~\ref{prop:meas}. The introduction of process noise shifts the optimum to a strictly positive $\lambda^{\star}>0$, confirming the errors-in-variables analysis of Section~\ref{sec:proc}. In this data-abundant regime, however, the resulting predictor-MSE improvement is \emph{marginal}: with $M \gg T_h$ the estimator variance is already small, so the benefit of regularization, while strictly present, is modest. At the opposite extreme, excessive regularization is detrimental: as $\lambda \to \infty$ the predictor over-shrinks (the initial-condition map $P_1 \to 0$) and the MSE ratio rises back above unity, most visibly for the measurement-only curve, completing the classical bias--variance trade-off. As shown next, the benefit of regularization becomes substantial once data are scarce. 

\begin{figure}[t!]
\centering
\includegraphics[width=\columnwidth]
{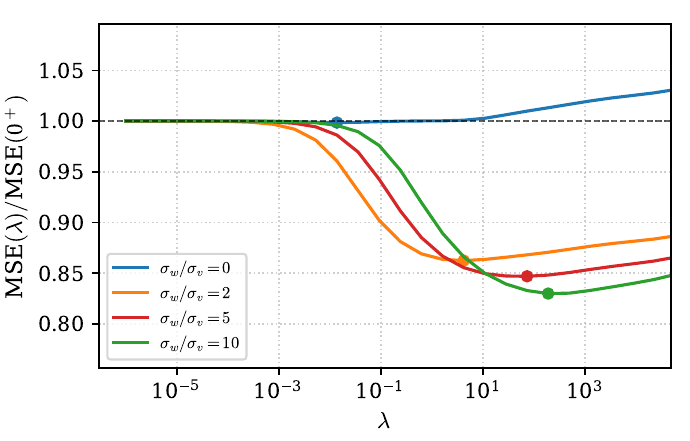}
\caption{Predictor MSE ratio vs.\ $\lambda$ at $M/T_h\approx20$. Measurement noise dictates $\lambda^{\star}\to0$, while process noise yields a marginal stationary benefit; for very large $\lambda$ the ratio rises back above unity as the predictor over-regularizes.}
\label{fig:biasvariance}
\end{figure}

While the stationary benefit is modest, the value of regularization increases as data become scarce. As quantified in Fig.~\ref{fig:lowdata}, under process noise the best-case predictor-MSE reduction grows steadily as $M/T_h \to 1$, reaching approximately $7\%$ at $M/T_h \approx 1.3$ for the mild process noise considered here, whereas the measurement-only benefit remains negligible across all data lengths. This confirms that the ill-conditioned, low-data regime, coupled with process noise, is where regularization contributes most; as shown next (Fig.~\ref{fig:condition}), the reduction grows further with the process-to-measurement noise ratio.

\begin{figure}[t!]
\centering
\includegraphics[width=\columnwidth]
{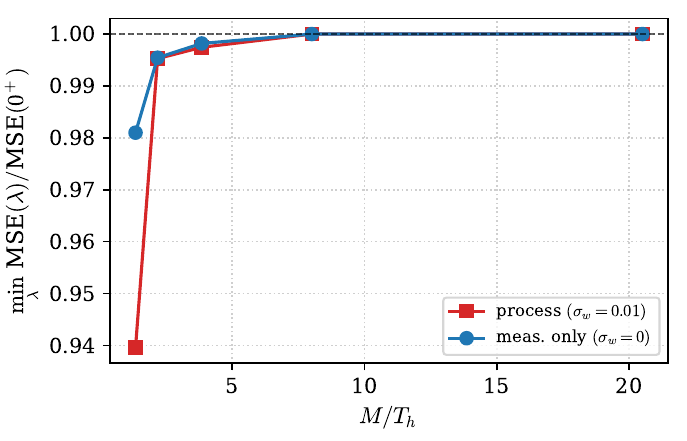}
\caption{Best-case MSE ratio vs.\ data length. The Regularization benefit concentrates in data-scarce regimes ($M/T_h \to 1$) under process noise.}
\label{fig:lowdata}
\end{figure}

The dual role of $\lambda$, improving numerical conditioning and reducing predictor error, is summarized in Fig.~\ref{fig:condition}. The top panel shows that $\lambda$ systematically improves the conditioning of the design matrix $\Phib\Phib^{\!\top}+\lambda I$, most markedly for small datasets. The bottom panel isolates the effect of the process-to-measurement noise ratio at a fixed, moderately data-scarce window ($M/T_h \approx 2.6$): as $\sigma_w/\sigma_v$ increases, both the optimal Regularization $\lambda^{\star}$ and the attainable predictor-MSE reduction grow \emph{monotonically}, the reduction reaching approximately $20\%$ at $\sigma_w/\sigma_v = 10$ while the measurement-dominated case ($\sigma_w/\sigma_v = 0$) stays essentially flat. This directly corroborates the monotonicity of $\lambda^{\star}$ in the noise ratio established in Proposition~\ref{prop:eiv}, and confirms that regularization is most valuable precisely when process noise dominates. For very large $\lambda$ every curve turns back upward as the predictor over-shrinks: under pure measurement noise the ratio exceeds unity, whereas under dominant process noise it saturates just below unity, since even maximal shrinkage remains preferable to the noise-biased SPC baseline. 

\begin{figure}[t!]
\centering
\begin{minipage}{0.49\textwidth}\centering
\includegraphics[width=\textwidth]{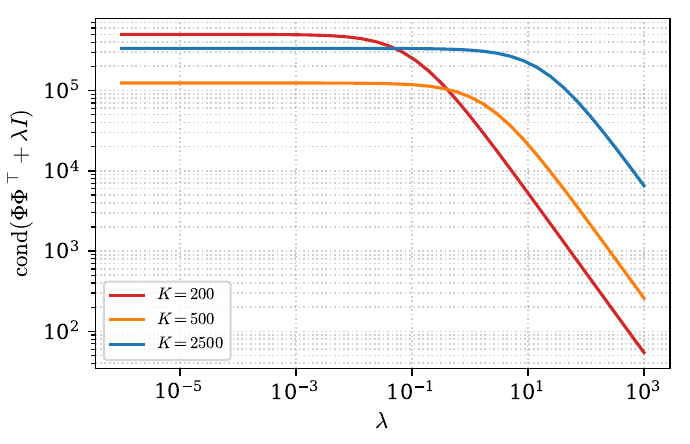}
\end{minipage}\hfill
\begin{minipage}{0.49\textwidth}\centering
\includegraphics[width=\textwidth]{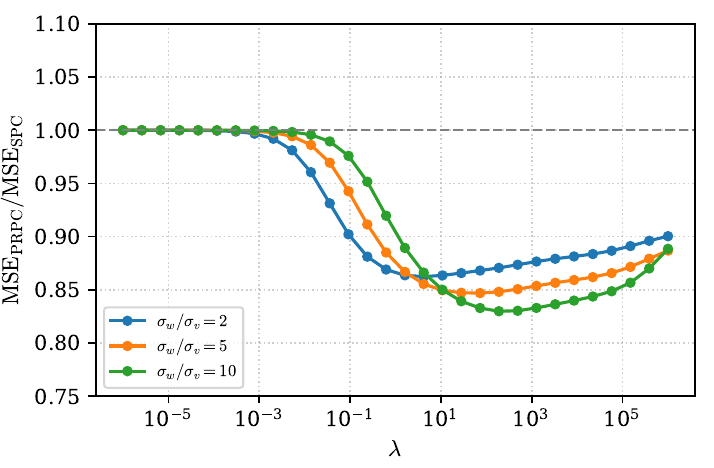}
\end{minipage}
\caption{Top: Design matrix conditioning vs.\ $\lambda$ for several data lengths. Bottom: Predictor MSE ratio vs.\ $\lambda$ at $M/T_h\approx 2.6$ for varying process-to-measurement noise ratios; both the benefit and the optimal $\lambda^{\star}$ increase monotonically with $\sigma_w/\sigma_v$, and $\sigma_w/\sigma_v=10$ yields the largest reduction. At very large $\lambda$ the curves turn back upward as the predictor over-regularizes.}
\label{fig:condition}
\end{figure}

To fully contextualize these theoretical gains, we benchmark PRPC against state-of-the-art formulations under severe measurement noise ($\sigma^2 = 0.25$). As depicted in the tracking trajectories of Fig.~\ref{fig:kkt_trajectories}, PRPC achieves rapid and stable convergence, driving the longitudinal velocity ($y_1$) to the reference within 2 seconds while strictly respecting control input constraints without erratic chattering. The regularized offline mapping successfully prevents high-frequency noise from corrupting the control action.

\begin{figure}[t!]
    \centering    \includegraphics[width=\columnwidth]{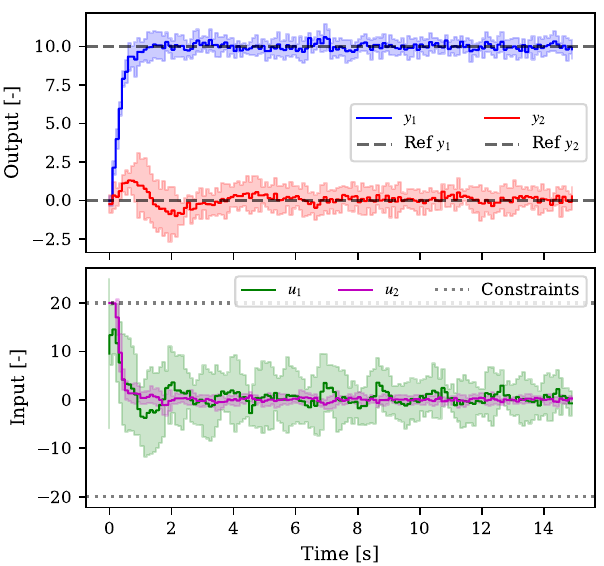}
    \caption{Closed-loop tracking trajectories under high measurement noise ($\sigma^2 = 0.25$) over 10 Monte Carlo runs. Shaded areas represent $\pm 1$ s.d.}
    \label{fig:kkt_trajectories}
\end{figure}

A rigorous statistical analysis over multiple Monte Carlo iterations (Fig.~\ref{fig:boxplots}) confirms this superiority. PRPC yields the lowest mean Total Weighted Control Cost ($\mathcal{J} \approx 5174.15$). In contrast, SPC and $\gamma$-DPC exhibit nearly identical, suboptimal distributions, indicating that online slack variables cannot adequately recover performance lost to an unregularized offline subspace projection. Furthermore, while Generalized Data-Driven Predictive Control (GDPC) exhibits lower mean input energy ($\mathcal{J}_u$), its total cost variance is highly volatile due to localized online overcompensation—a structural collapse that PRPC circumvents entirely offline.

\begin{figure}[t!]
    \centering
    \includegraphics[width=\columnwidth]{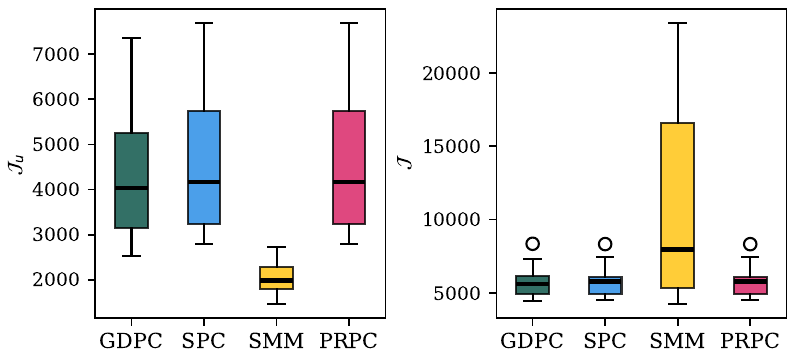} 
    \caption{Statistical comparison of the Input Energy ($\mathcal{J}_u$) and Total Weighted Control Cost ($\mathcal{J}$) across tested data-driven controllers.}
    \label{fig:boxplots}
\end{figure}

\subsection{LTV Benchmark: Polytopic System}

Transitioning from static environments, we validate the real-time adaptive capabilities and finite-sample safety bounds of the proposed framework. We consider a regulation problem for an unknown, multi-variable discrete-time polytopic LTV system transitioning within a convex hull defined by three LTI vertices:
\begin{align*}
    x(k+1) &= A_{ltv}(k) x(k) + B_{ltv}(k) u(k), \\ 
    y(k) &= C_{ltv}(k) x(k) + w(k).
\end{align*}
At each time step $k$, the system matrices are synthesized via a convex combination $(A_{ltv}, B_{ltv}, C_{ltv}) = \sum_{i=1}^3 \lambda_i(k) (A_i, B_i, C_i)$, where $\lambda_i(k) \ge 0$ and $\sum_{i=1}^3 \lambda_i(k) = 1$. The three boundary vertices are explicitly defined as follows. For the first vertex, the matrices are expressed in Table \ref{tab:ltv_matrices}.

\begin{table}[H]
    \centering
    \caption{System matrices for the three LTV boundary vertices.}
    \label{tab:ltv_matrices}
    % \resizebox forces the table to fit the column width exactly
    \resizebox{\columnwidth}{!}{%
        % \scriptsize reduces the font size to make it compact
        \scriptsize
        \setlength{\arraycolsep}{2.5pt} % Compresses horizontal space inside matrices
        \renewcommand{\arraystretch}{1.5} % Adds a bit of padding between rows
        
        \begin{tabular}{c}
            \toprule
            \textbf{Vertex Matrices} \\
            \midrule
            
            % --- Vertex 1 ---
            $A_1 = \begin{bmatrix} 
                0.30 & -0.35 & 0.71 & 0.04 \\ 
               -0.15 &  0.42 & 0.14 & 0.03 \\ 
                0.56 &  0.11 &-0.22 & 0.47 \\ 
                0.01 & -0.09 & 0.52 & 0.81 
            \end{bmatrix}, \,
            B_1 = \begin{bmatrix} 
               -1.07 &  0.33 \\ 
               -0.81 & -0.75 \\ 
               -2.94 &  1.37 \\ 
                0.0  &  0.0 
            \end{bmatrix}, \,
            C_1 = \begin{bmatrix} 
               -0.10 & 0.32 &  0.0  & -0.16 \\ 
               -0.24 & 0.0  & -0.03 &  0.63 
            \end{bmatrix}$ \\
            \midrule
            
            % --- Vertex 2 ---
            $A_2 = \begin{bmatrix} 
                0.19 &  0.44 &-0.42 &  0.39 \\ 
                0.20 &  0.31 & 0.53 & -0.22 \\ 
                0.59 & -0.30 & 0.07 &  0.32 \\ 
               -0.01 &  0.47 & 0.24 &  0.33 
            \end{bmatrix}, \,
            B_2 = \begin{bmatrix} 
                2.91 & -0.47 \\ 
                0.83 & -0.27 \\ 
                1.38 &  1.10 \\ 
               -1.06 & -0.28 
            \end{bmatrix}, \,
            C_2 = \begin{bmatrix} 
                0.70 &  0.0  &-1.58 & 0.0  \\ 
                0.0  & -0.82 & 0.51 & 0.03 
            \end{bmatrix}$ \\
            \midrule
            
            % --- Vertex 3 ---
            $A_3 = \begin{bmatrix} 
                0.21 &  0.37 & 0.33 &  0.05 \\ 
                0.34 &  0.30 & 0.04 & -0.18 \\ 
                0.11 &  0.14 & 0.15 &  0.04 \\ 
               -0.05 & -0.10 & 0.24 &  0.37 
            \end{bmatrix}, \,
            B_3 = \begin{bmatrix} 
               -0.16 & -0.88 \\ 
               -0.15 & -0.48 \\ 
               -0.53 & -0.71 \\ 
                1.68 & -1.17 
            \end{bmatrix}, \,
            C_3 = \begin{bmatrix} 
               -0.19 & 1.53 &-1.06 &  1.23 \\ 
               -0.27 & 0.0  & 0.0  & -0.23 
            \end{bmatrix}$ \\
            \bottomrule
        \end{tabular}%
    }
\end{table}
The measurement noise is sampled from a zero-mean multivariate normal distribution, $w(k) \sim \mathcal{N}(0, \sigma^2 I_{n_y})$ with $\sigma^2 = 0.02$, thereby satisfying the conditionally sub-Gaussian assumption with a variance proxy $c_w = 2\sqrt{\sigma^2}$. To rigorously test the adaptive controller, the prediction horizons are aggressively compressed to $T_\mathrm{ini}=N=2$, utilizing a highly constrained sliding offline window of $M=50$. The recursive parameters are set to $\rho_f=0.95$ (forgetting factor) and $\gamma=0.01$ (convex anchor), with a probabilistic confidence target of $1-\delta=0.95$.

The adaptive PRPC formulation is first subjected to aggressive, independent and identically distributed (i.i.d.) polytopic switching, wherein the weighting vector $\lambda(k)$ is sampled uniformly from a Dirichlet distribution at every discrete time step. As shown in Fig.~\ref{fig:ltv_regulation}, despite this violent multiplicative disturbance, the mean output is successfully regulated to a tight neighborhood of the origin ($|y|\approx[0.03,\,0.01]$). Crucially, the convex anchoring mechanism mathematically guarantees that the predictor remains well-posed even as the system settles and feedback signals lose persistent excitation. 

\begin{figure}[t!]
\centering
\includegraphics[width=\columnwidth]{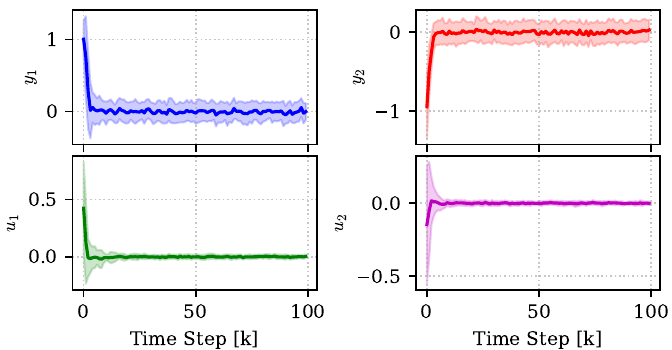}
\caption{Adaptive PRPC regulation under i.i.d.\ polytopic switching (50-run mean $\pm 1$ s.d.). The predictor remains well-posed despite loss of excitation.}
\label{fig:ltv_regulation}
\end{figure}

While i.i.d. switching evaluates transient resilience, our robust theoretical guarantees are specifically tailored for slowly varying dynamics. When subjected to a continuous sinusoidal drift across the polytopic weights (Fig.~\ref{fig:ltv_drift}), the adaptive covariance update rapidly tracks the shifting parameters, yielding highly stable convergence and keeping the local prediction mismatch negligible.

\begin{figure}[t!]
\centering
\includegraphics[width=\columnwidth]{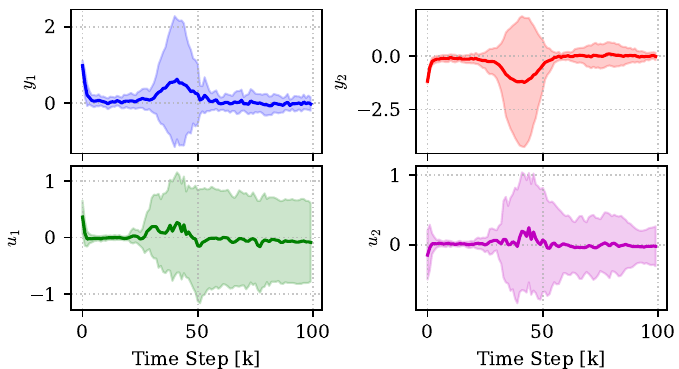}
\caption{Adaptive PRPC regulation under slow sinusoidal polytopic drift (40-run mean $\pm 1$ s.d.).}
\label{fig:ltv_drift}
\end{figure}

\paragraph{Ablation: adaptation and the role of the anchor.}
To isolate the contribution of each component of the adaptive scheme, we compare the full Adaptive PRPC against two reduced variants under the slow-drift scenario: a \emph{fixed} predictor constructed once from the offline covariances with no online update, and an \emph{un-anchored} recursive predictor ($\gamma = 0$) that relies on pure exponential forgetting without the offline anchor. To ensure that time-variation, rather than measurement noise, is the dominant effect, the noise is reduced to $\sigma^2 = 10^{-3}$ for this study. Fig.~\ref{fig:ablation} reports the closed-loop regulation cost $\sum_k \|y(k)\|_2^2$ over 50 Monte-Carlo runs. The fixed predictor is over an order of magnitude worse in median cost and exhibits a heavy failure tail: unable to re-identify the drifting dynamics, it frequently loses regulation. Online adaptation is therefore essential. The un-anchored recursion, in contrast, performs on par with the full scheme in these persistently excited runs, confirming that the convex anchor incurs no measurable performance penalty. Its value is structural rather than average-case: by Lemma~\ref{lem:wellposed}, the anchor guarantees the uniform spectral floor $\Sigma_{pp}^{\mathrm{act}}(k) \succeq \gamma\Sigma_{pp}^{\mathrm{off}} \succ 0$ that underpins the bounded-radius result (Lemma~\ref{lem:rbound}) and hence the recursive-feasibility guarantee (Theorem~\ref{thm:feas}); the un-anchored recursion carries no such certificate and can become ill-conditioned in the zero-excitation limit. The anchor thus supplies the well-posedness guarantee required by the theory of Section~\ref{sec:robust} at no cost to nominal performance.

\begin{figure}[t!]
\centering
\includegraphics[width=\columnwidth]{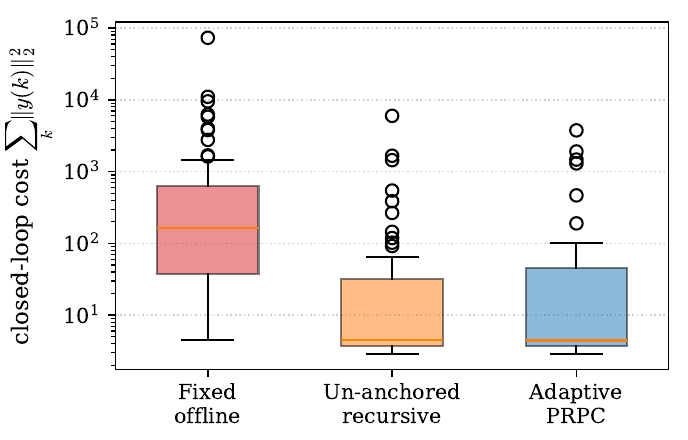}
\caption{Ablation under slow drift ($\sigma^2=10^{-3}$, 50 Monte-Carlo runs): closed-loop regulation cost for the fixed offline predictor, the un-anchored recursion, and the full Adaptive PRPC (log scale). Adaptation is essential, as the fixed predictor is markedly worse with a heavy failure tail; the anchor matches the un-anchored performance while additionally guaranteeing uniform well-posedness (Lemma~\ref{lem:wellposed}).}
\label{fig:ablation}
\end{figure}

Ultimately, the safety of this adaptive framework hinges on the validity of the probabilistic confidence radius. Fig.~\ref{fig:coverage} assesses the empirical coverage probability of the self-normalized martingale bound during the slow drift scenario. A naive radius, derived purely from measurement noise statistics while ignoring dynamic parametric mismatch, achieves only 87.5\% coverage, thereby failing to provide reliable safety guarantees. Conversely, our proposed mismatch-augmented radius accurately achieves the targeted 95.0\% coverage. This stark contrast validates the theoretical derivation of Theorem~\ref{thm:snm} and confirms that incorporating robust mismatch tracking into the dynamically tightened constraints is strictly necessary to ensure safe, recursive feasibility in LTV data-driven control.

\begin{figure}[t!]
\centering
\includegraphics[width=0.48\textwidth]{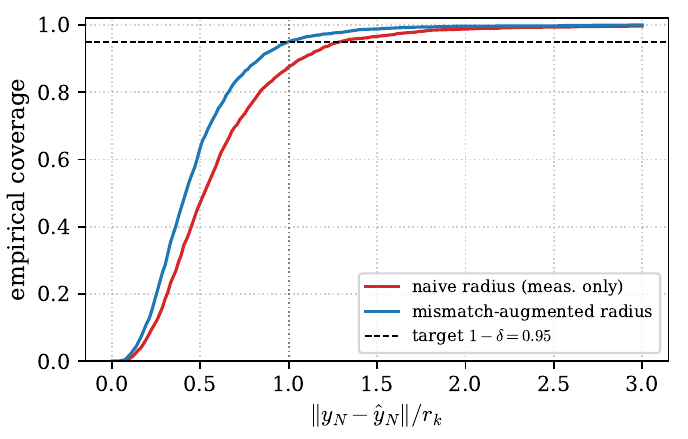}
\caption{Empirical coverage of the prediction bound under slow drift. The mismatch-augmented radius strictly meets the targeted $95\%$ confidence level.}
\label{fig:coverage}
\end{figure}

% ==============================================================
\section{Conclusion}
\label{sec:conclusion}
% ==============================================================

This paper presented PRPC, a novel indirect data-driven framework that bridges the computational efficiency of subspace methods with the robustness of regularized direct formulations. By systematically relaxing the initial condition matching while strictly enforcing future control trajectories, PRPC condenses the offline predictor synthesis into a fixed-dimension $\mathcal{O}(T_z^3)$ covariance update. This reduces the dominant offline computational load by over 50\% compared to unregularized Subspace Predictive Control (SPC) while preserving an identical, low-dimensional online quadratic program.

Crucially, our bias--variance analysis established that while SPC is asymptotically optimal under pure measurement noise, PRPC provides transformative advantages under process noise and data scarcity. In highly constrained, errors-in-variables regimes ($M/T_h \approx 1.3$), PRPC demonstrated up to a 7\% reduction in predictor mean-squared error under mild process noise, with the improvement growing further as the process-to-measurement noise ratio increases. Furthermore, PRPC guarantees structural well-posedness even when the sample size falls below the regressor dimension ($M < T_h$), operating stably with a factor-of-three reduction in required data compared to SPC. This structural resilience seamlessly enables a sliding-window adaptive architecture for LTV systems. Under severe parametric drift, the adaptive PRPC framework maintained tight closed-loop tracking while exactly satisfying a targeted 95\% probabilistic prediction bound, confirming its efficacy for robust, resource-constrained data-driven control.

% ==============================================================
% Place this in the \appendix section at the end of the document
% ==============================================================
\appendix
\section{Proof of Proposition~\ref{prop:collapse} }
\label{app:collapse}

The proof relies on mapping the high-dimensional matrix operators $G_1$ and $G_2$ from \eqref{eq:G} to the compact covariance blocks defined in \eqref{eq:covs}. Let $\mathbf{H}=\Zp^{\!\top}\Zp+\lambda I_M$. Applying the matrix push-through identity, we obtain the fundamental relation:
\begin{equation}
  \mathbf{H}^{-1}\Zp^{\!\top} = \Zp^{\!\top}(\Zp\Zp^{\!\top}+\lambda I_{T_z})^{-1} = \Zp^{\!\top}\Wp,
  \label{eq:pushthrough}
\end{equation}
where $\Wp = (\Spp+\lambda I_{T_z})^{-1}$. Utilizing the property $\mathbf{H}^{-1} = \lambda^{-1}(I_M - \mathbf{H}^{-1}\Zp^{\!\top}\Zp)$ alongside \eqref{eq:pushthrough}, we rewrite the inverse as $\mathbf{H}^{-1} = \lambda^{-1}(I_M - \Zp^{\!\top}\Wp\Zp)$. 

We now evaluate the Schur complement $\tilde{S} = \Uf\mathbf{H}^{-1}\Uf^{\!\top}$:
\begin{align}
  \tilde{S} &= \lambda^{-1}\big(\Uf\Uf^{\!\top} - \Uf\Zp^{\!\top}\Wp\Zp\Uf^{\!\top}\big) \nonumber \\
  &= \lambda^{-1}\big(\Suu - \Sup\Wp\Sup^{\!\top}\big) = S. \label{eq:S_equiv}
\end{align}
To derive $P_2$, we evaluate the term $\Yf\mathbf{H}^{-1}\Uf^{\!\top}$:
\begin{align}
  \Yf\mathbf{H}^{-1}\Uf^{\!\top} &= \lambda^{-1}\big(\Yf\Uf^{\!\top} - \Yf\Zp^{\!\top}\Wp\Zp\Uf^{\!\top}\big) \nonumber \\
  &= \lambda^{-1}\big(\Syu - \Syp\Wp\Sup^{\!\top}\big) = \Phi_{yu}. \label{eq:Phi_equiv}
\end{align}
Substituting \eqref{eq:S_equiv} and \eqref{eq:Phi_equiv} into $P_2 = \Yf G_2 = \Yf\mathbf{H}^{-1}\Uf^{\!\top}\tilde{S}^{-1}$ yields exactly \eqref{eq:P2}:
\begin{equation}
    P_2 = \Phi_{yu}S^{-1}.
\end{equation}
Finally, to derive $P_1 = \Yf G_1$, we distribute the right-multiplication by $\Zp^{\!\top}$ across the expression for $G_1$ defined in \eqref{eq:G}:
\begin{equation}
  P_1 = \Yf\mathbf{H}^{-1}\Zp^{\!\top} - (\Yf\mathbf{H}^{-1}\Uf^{\!\top})\tilde{S}^{-1}(\Uf\mathbf{H}^{-1}\Zp^{\!\top}).
\end{equation}
Applying the push-through identity \eqref{eq:pushthrough}, the outer terms evaluate to $\Yf\mathbf{H}^{-1}\Zp^{\!\top} = \Yf\Zp^{\!\top}\Wp = \Syp\Wp$ and $\Uf\mathbf{H}^{-1}\Zp^{\!\top} = \Uf\Zp^{\!\top}\Wp = \Sup\Wp$. Substituting these alongside $\Phi_{yu}$ and $S$ yields exactly \eqref{eq:P1}:
\begin{equation}
    P_1 = \Syp\Wp - \Phi_{yu}S^{-1}\Sup\Wp.
\end{equation}
The dominant computational steps are the construction of $\Spp$ ($\mathcal{O}(T_z^2M)$) and the inversion of the fixed $T_z \times T_z$ matrix $\Spp+\lambda I_{T_z}$ ($\mathcal{O}(T_z^3)$), rendering the operations strictly independent of $\mathcal{O}(M^3)$. \hfill $\blacksquare$

% ==============================================================
% Appendices
% ==============================================================

\section{Proof of Proposition~\ref{prop:meas} }
\label{app:meas}
Assuming $\sigma_w=0$, the true system states are deterministic. The observed regressor is $\Phib = \Phib^0 + \Delta\Phib$, where the noise term $\Delta\Phib$ is strictly non-zero only on the $\Yp$ rows. Thus, $\E[\Delta\Phib] = 0$ and $\E[\Delta\Phib\Delta\Phib^\top] = \Sigma_\Delta = \mathrm{diag}(\mathbf{0}, \sigma_v^2 I, \mathbf{0})$. 

The future output is defined as $\Yf = \Thetatrue\Phib^0 + \Delta\Yf$. Substituting $\Phib^0 = \Phib - \Delta\Phib$, we obtain:
\begin{equation}
    \Yf = \Thetatrue\Phib + (\Delta\Yf - \Thetatrue\Delta\Phib).
\end{equation}
The unregularized SPC estimator is $\Thetahat_{\mathrm{spc}} = \Yf\Phib^\top (\Phib\Phib^\top)^{-1}$. Taking the mathematical expectation yields:
\begin{equation}
    \E[\Thetahat_{\mathrm{spc}}] = \Thetatrue + \E\big[(\Delta\Yf - \Thetatrue\Delta\Phib)\Phib^\top (\Phib\Phib^\top)^{-1}\big].
\end{equation}
Because future measurement noise $\Delta\Yf$ is statistically independent of the past regressor $\Phib$, the cross-term $\E[\Delta\Yf\Phib^\top]$ vanishes. Consequently:
\begin{equation}
    \E[\Thetahat_{\mathrm{spc}}] = \Thetatrue - \Thetatrue \E\big[\Delta\Phib\Phib^\top (\Phib\Phib^\top)^{-1}\big].
\end{equation}
Applying a first-order approximation for sufficiently large $M$, the inverse converges to $(\Phib\Phib^\top)^{-1} \approx (\Sigma_{\Phib^0} + \Sigma_\Delta)^{-1}$. Noting that $\E[\Delta\Phib\Phib^\top] = \Sigma_\Delta$, the expected value becomes:
\begin{align*}
    \E[\Thetahat_{\mathrm{spc}}] &\approx \Thetatrue - \Thetatrue\Sigma_\Delta(\Sigma_{\Phib^0} + \Sigma_\Delta)^{-1} \\&= \Thetatrue\Sigma_{\Phib^0}(\Sigma_{\Phib^0} + \Sigma_\Delta)^{-1}.
\end{align*}
Since the signal covariance $\Sigma_{\Phib^0}$ scales linearly with $M$ while $\Sigma_\Delta$ remains bounded by the constant noise variance, the resulting bias term scales strictly as $\mathcal{O}(1/M)$. Because the estimator variance independently scales as $\mathcal{O}(1/M)$, injecting a Regularization penalty $\lambda > 0$ introduces an $\mathcal{O}(\lambda)$ bias that dominates the MSE for large $M$. Thus, $\lambda^\star \to 0$ as $M\to\infty$, establishing near-optimality (and asymptotic exact optimality) of SPC. \hfill $\blacksquare$

\section{Proof of Proposition~\ref{prop:atten} }
\label{app:inconsistency}
Under process noise ($\sigma_w > 0$), the observed regressor $\Phib = \Phib^0 + \Delta\Phib$ is contaminated by both measurement noise and propagated process noise. The expected noise covariance $\Sigma_{\Delta\Phib} = \E[\Delta\Phib\Delta\Phib^\top]$ contains a non-vanishing, structured covariance term $\sigma_w^2 \Gamma_w$ within the $\Yp$ block.

As the number of data samples $M \to \infty$, the normalized sample covariance matrices converge in probability to their expected population matrices:
\begin{equation}
    \frac{1}{M}\Phib\Phib^\top \xrightarrow{p} \bar{\Sigma}_{\Phib^0} + \bar{\Sigma}_{\Delta\Phib},
\end{equation}
where $\bar{\Sigma}$ denotes the normalized limits. Substituting the output equation $\Yf = \Thetatrue\Phib^0 + \Delta\Yf$ into the unregularized SPC estimator yields:
\begin{equation}
    \Thetahat_{\mathrm{spc}} = (\Thetatrue\Phib^0 + \Delta\Yf)(\Phib^0 + \Delta\Phib)^\top (\Phib\Phib^\top)^{-1}.
\end{equation}
Dividing the numerator and denominator by $M$ and taking the probability limit as $M \to \infty$:
\begin{equation}
    \frac{1}{M}\Phib^0(\Phib^0 + \Delta\Phib)^\top \xrightarrow{p} \bar{\Sigma}_{\Phib^0}.
\end{equation}
The cross-term $\frac{1}{M}\Delta\Yf\Phib^\top \xrightarrow{p} 0$ because future measurement noise is independent of the past regressor. The asymptotic limit of the estimator is therefore:
\begin{equation}
    \mathrm{plim}_{M \to \infty} \Thetahat_{\mathrm{spc}} = \Thetatrue \bar{\Sigma}_{\Phib^0} (\bar{\Sigma}_{\Phib^0} + \bar{\Sigma}_{\Delta\Phib})^{-1}.
\end{equation}
Because $\bar{\Sigma}_{\Delta\Phib}$ is strictly positive definite on the subspace corresponding to $\Yp$, the matrix inverse $(\bar{\Sigma}_{\Phib^0} + \bar{\Sigma}_{\Delta\Phib})^{-1}$ strictly attenuates the signal components. Consequently, the asymptotic bias is:
\begin{equation}
    \Thetatrue - \mathrm{plim}_{M \to \infty} \Thetahat_{\mathrm{spc}} = \Thetatrue \bar{\Sigma}_{\Delta\Phib} (\bar{\Sigma}_{\Phib^0} + \bar{\Sigma}_{\Delta\Phib})^{-1} \neq 0.
\end{equation}
This non-vanishing asymptotic bias mathematically guarantees that the unregularized SPC estimator is statistically inconsistent under process noise. \hfill $\blacksquare$

% ==============================================================
\section{Proof of Proposition~\ref{prop:eiv} }
\label{app:eiv}
% ==============================================================

Diagonalise in the population eigenbasis of $\Sigma_{\Phib^0}$ with
signal variances $s_i^2\propto M$. Under the EIV contamination the ridge
estimator has, per direction~$i$,
\begin{align}
  \mathrm{bias}_i(\lambda)
  &=\theta_i\Big(1-\frac{s_i^2 a_i}{s_i^2+\lambda}\Big),\qquad
  \mathrm{var}_i(\lambda)=\frac{s_i^2\sigma_\zeta^2}{(s_i^2+\lambda)^2},
\end{align}
where $a_i\in(0,1]$ is the attenuation factor of
Proposition~\ref{prop:atten} and $\sigma_\zeta^2$ the effective residual
variance. Differentiating $\MSE_i=\mathrm{bias}_i^2+\mathrm{var}_i$,
\begin{equation}
  \frac{\partial\MSE_i}{\partial\lambda}\Big|_{\lambda=0}
  =\underbrace{2\theta_i^2\,\frac{s_i^2 a_i(1-a_i)}{s_i^4}}_{\ge0\ \text{(bounded; }=0\text{ if }a_i=1)}
  -\underbrace{\frac{2\sigma_\zeta^2}{s_i^2}}_{>0}.
\end{equation}
When contamination is present, $a_i<1$; if the noise term dominates the
(bounded) bias term, which holds whenever
$\sigma_\zeta^2>\theta_i^2 a_i(1-a_i)/s_i^0$ for the significant
directions, then $\partial\MSE_i/\partial\lambda|_0<0$, so a strictly
positive $\lambda^{\star}$ strictly decreases the MSE. Under pure
measurement noise $a_i\to1$ and both non-negativity and $\lambda^{\star}=0$
are recovered, consistent with Appendix~\ref{app:meas}. Monotonicity of
$\lambda^{\star}$ in $\sigma_w/\sigma_v$ follows because larger $\sigma_w$
decreases each $a_i$ and increases $\sigma_\zeta^2$, both of which push the
zero-crossing of $\partial\MSE_i/\partial\lambda$ to larger $\lambda$.
$\qquad\blacksquare$

% ==============================================================
% APPENDICES
% ==============================================================

\section{Proof of Theorem~\ref{thm:snm} }
\label{app:snm}

Based on the decomposition of the true mapping $\Thetatrue(k) = \Thetabar + \Delta\Theta(k)$, the realized prediction error is given by:
\begin{equation*}
    \yN(k) - \yhatN(k) = (\Thetabar - \Thetahat(k))\vphi(k) + \Delta\Theta(k)\vphi(k) + \mathbf{w}_N(k).
\end{equation*}
We isolate the estimation error matrix $E(k) = \Thetabar - \Thetahat(k)$. Vectorizing the multi-output regression via the identity $\mathrm{vec}(ABC) = (C^\top \otimes A)\mathrm{vec}(B)$, we represent the accumulated noise dynamically. Let $\mathbf{s}_k = \sum_{i=0}^{k-1} (I_{n_yN} \otimes \vphi(i))\mathbf{w}_N(i)$ be the vector-valued martingale. Under Assumption~\ref{as:noise}, the predictable quadratic variation of $\mathbf{s}_k$ is tightly bounded by $c_w^2(I_{n_yN} \otimes \Phi(k))$.

Applying the self-normalized martingale (SNM) concentration inequality for vector-valued processes, we obtain that with probability at least $1-\delta$, for all $k \ge 0$ simultaneously:
\begin{align*}
  \|\mathbf{s}_k\|^2_{(I \otimes \mathbf{V}(k))^{-1}} &\le 2c_w^2 \\ & \times\log \left( \frac{\det(I \otimes \mathbf{V}(k))^{1/2} \det(\lambda I)^{-1/2}}{\delta} \right).
\end{align*}
Using the determinant property of Kronecker products, $\det(I_{n_yN} \otimes \mathbf{V}(k)) = (\det \mathbf{V}(k))^{n_yN}$, the logarithmic term simplifies algebraically to:
\begin{align*}
  &\frac{1}{2} \log \det(I \otimes \mathbf{V}(k)) - \frac{1}{2} \log \det(\lambda I) \\&= \frac{n_yN}{2} \log \det (I + \lambda^{-1}\Phi(k)).
\end{align*}
Utilizing the trace-norm identity $\|\mathrm{vec}(E^\top)\|_{I \otimes \mathbf{V}(k)} = \|E \mathbf{V}(k)^{1/2}\|_F$, the SNM inequality directly bounds the Frobenius norm of the estimation error matrix:
\begin{align}
  &\|(\Thetabar - \Thetahat(k))\mathbf{V}(k)^{1/2}\|_F \le c_w \nonumber \\& \times\sqrt{n_yN \log \det (I + \lambda^{-1}\Phi(k)) + 2\log(1/\delta)}.
  \label{eq:snm_frob}
\end{align}
The additional factor $\sqrt{1+\lambda\varrho(\mathbf{V}(k)^{-1})}$ is introduced to rigorously absorb the ridge-regression initialization bias inherent to $\mathbf{V}(0) = \lambda I$. Thus, we define this combined scalar bound as $\beta_k$.

Applying the Cauchy-Schwarz and sub-multiplicative norm inequalities to the point-wise prediction error:
\begin{align*}
  &\|(\Thetabar - \Thetahat(k))\vphi(k)\|_2 \le \|(\Thetabar - \Thetahat(k))\mathbf{V}(k)^{1/2}\|_F\\
  & \times  \|\mathbf{V}(k)^{-1/2}\vphi(k)\|_2 \le \beta_k \|\vphi(k)\|_{\mathbf{V}(k)^{-1}}.
\end{align*}
Finally, bounding the deterministic mismatch term via Assumption~\ref{as:noise} ($\|\Delta\Theta(k)\vphi(k)\|_2 \le L_\Theta\|\vphi(k)\|_2$) and applying the triangle inequality yields the uniform radius bound \eqref{eq:radius}. \hfill $\blacksquare$

\section{Proof of Theorem~\ref{thm:feas} }
\label{app:feas}

We proceed by induction, conditioning strictly on the probability-$(1-\delta)$ event $\mathcal{E}$ established in Theorem~\ref{thm:snm}, wherein the realized prediction error satisfies $\|\yN(k) - \yhatN(k)\|_2 \le r_k(\vphi(k))$ for all $k \ge 0$.

Assume the PRPC optimization is feasible at time index $k$. Let $\uN^{\star}(k)$ denote the optimal control sequence and $\yhatN^{\star}(k)$ the corresponding nominal output prediction. Because $\yhatN^{\star}(k)$ lies within the dynamically tightened set $\mathcal{Y}_{\mathrm{tight}}(k) = \mathcal{Y} \ominus \mathcal{B}_{r_k}$, the realized closed-loop output strictly satisfies the true constraints: $\yN(k) \in \yhatN^{\star}(k) \oplus \mathcal{B}_{r_k} \subseteq \mathcal{Y}$.

At time step $k+1$, we construct the standard shifted candidate control sequence:
\begin{equation*}
    \tilde{\uN}(k+1) = \big[u^{\star}_{1|k}, \dots, u^{\star}_{N-1|k}, \kappa_f(\hat{y}_{N|k})\big]^\top.
\end{equation*}
The first $N-1$ predicted outputs associated with this candidate sequence are strict temporal shifts of $\yhatN^{\star}(k)$. By the inductive assumption, these elements inherently satisfy the tightened constraints. 

The terminal control action $u_{N-1} = \kappa_f(\hat{y}_{N|k})$ guarantees that the terminal predicted state maps into the robust invariant set $\mathcal{Y}_f$. By Assumption~\ref{as:terminal}, $\mathcal{Y}_f \subseteq \mathcal{Y} \ominus \mathcal{B}_{r_\infty}$. Furthermore, Lemma~\ref{lem:rbound} enforces the uniform upper bound $r_{k+1}(\vphi(k+1)) \le r_\infty$. Therefore:
\begin{equation*}
    \mathcal{Y}_f \subseteq \mathcal{Y} \ominus \mathcal{B}_{r_\infty} \subseteq \mathcal{Y} \ominus \mathcal{B}_{r_{k+1}}.
\end{equation*}
Consequently, the entire candidate sequence $\tilde{\uN}(k+1)$ strictly satisfies the tightened constraints $\mathcal{Y}_{\mathrm{tight}}(k+1)$. Thus, the optimization problem is recursively feasible at time $k+1$, propagating the feasibility for all $k \ge 0$ under the high-probability event $\mathcal{E}$. \hfill $\blacksquare$

\section{Proof of Theorem~\ref{thm:isps} }
\label{app:isps}

Conditioning on the probability-$(1-\delta)$ event $\mathcal{E}$, let $V_N^{\star}(k)$ denote the optimal value of the PRPC objective function at time $k$. At time $k+1$, the shifted candidate sequence $\tilde{\uN}(k+1)$ defined in Appendix~\ref{app:feas} provides a strict upper bound on the optimal value function: $V_N^{\star}(k+1) \le J(\tilde{\uN}(k+1), \tilde{\mathbf{y}}_N(k+1))$. 

Decomposing the cost function along the predicted horizon yields:
\begin{align*}
  V_N^{\star}(k+1) &\le V_N^{\star}(k) - \ell(y(k), u(k)) \\& \quad+ \big[V_f(\hat{y}^+_f) - V_f(\hat{y}_f) + \ell_f\big] + \Delta_{\mathrm{pred}},
\end{align*}
where $\ell(y,u) = \|y-r_y\|_Q^2 + \|u-r_u\|_R^2$ is the stage cost, and the bracketed term evaluates the terminal cost dissipation. By the standard robust MPC terminal decrease condition (implicit in Assumption~\ref{as:terminal}), the bracketed term is strictly non-positive ($\le 0$). 

The perturbation term $\Delta_{\mathrm{pred}}$ encapsulates the cost deviation induced by evaluating the sequence using the realized output rather than the prior nominal prediction. Because the quadratic stage cost is Lipschitz continuous on the compact sets $\mathcal{Y} \times \mathcal{U}$ with a Lipschitz constant $L_J > 0$, the perturbation is bounded linearly by the magnitude of the prediction error:
\begin{equation*}
    \Delta_{\mathrm{pred}} \le L_J \|\yN(k) - \yhatN(k)\|_2 \le L_J \, r_k(\vphi(k)).
\end{equation*}
Because $Q \succ 0$ and $R \succ 0$, the stage cost is strictly lower-bounded by a class-$\mathcal{K}_\infty$ function of the tracking error: $\ell(y,u) \ge \alpha_1(\|y-r_y\|_2)$. Substituting these bounds yields the dissipation inequality:
\begin{equation*}
  V_N^{\star}(k+1) - V_N^{\star}(k) \le -\alpha_1(\|y(k)-r_y\|_2) + L_J \, r_k(\vphi(k)).
\end{equation*}
Defining the class-$\mathcal{K}$ function $\sigma(s) = L_J s$, we recover exactly \eqref{eq:isps}. Because $r_k \le r_\infty$ (Lemma~\ref{lem:rbound}), the bounded disturbance term $\sigma(r_k)$ guarantees that the tracking error ultimately converges to an invariant set of magnitude $\mathcal{O}(r_\infty)$, strictly satisfying the criteria for Input-to-State practical Stability (ISpS). \hfill $\blacksquare$
% ==============================================================

\end{document}